\newtheorem{theorem}{Theorem}
\newtheorem{assumption}{Assumption}
\newtheorem{condition}[theorem]{Condition}
\newtheorem{conjecture}[theorem]{Conjecture}
\newtheorem{corollary}[theorem]{Corollary}
\newtheorem{definition}[theorem]{Definition}
\newtheorem{example}[theorem]{Example}
\newtheorem{lemma}[theorem]{Lemma}
\newtheorem{proposition}[theorem]{Proposition}
\newtheorem{remark}[theorem]{Remark}
\newcommand{\bearno}{\begin{eqnarray*}}
\newcommand{\enarno}{\end{eqnarray*}}
\newcommand{\cA}{\mathcal{A}}
\newcommand{\E}{\mathbb{E}}
\newcommand{\R}{\mathbb{R}}
\newcommand{\p}{\mathbb{P}}
\newcommand{\bZ}{Z^{\beta}}
\newcommand{\bR}{\tilde{R}^{\beta}}
\newcommand{\bp}{p_{\beta}}
\newcommand{\bs}{\beta^*}
\newcommand{\bg}{g^{\beta}}
\newcommand{\vbs}{v_{\beta^*}}
\newcommand{\vb}{v_{\beta}}
\newcommand{\tZ}{\tilde{Z}}
\newcommand{\tX}{\tilde{X}}
\newcommand{\tR}{\tilde{R}}
\newcommand{\tp}{\tilde{p}}
\newcommand{\tg}{\tilde{g}}
\newcommand{\ttau}{\tilde{\tau}}
\title{On the Optimal Dividend Problem in the Dual Model with Surplus-Dependent Premiums}
\author{Ewa Marciniak}
\address{AGH University of Science and Technology in Cracow, Faculty of Applied Mathematics, Poland.}
\email{marciniak.ew@gmail.com}
\author{Zbigniew Palmowski}
\address{Mathematical Institute, University of Wroc{\l}aw, Poland.}
\email{zbigniew.palmowski@gmail.com}
\thanks{This work is partially supported by the National Science Centre under the grant 2015/17/B/ST1/01102.
Second author kindly acknowledges partial support by the project RARE -318984, a Marie Curie IRSES Fellowship within the 7th European
Community Framework Programme.}
\date{\today}
\subjclass[2010]{60G51, 60G50, 60K25, 93E20} %
\keywords{}
\begin{document}

\begin{abstract}

This paper concerns the dual risk model, dual to the risk model for insurance applications, where premiums are surplus-dependent. In such a model premiums are regarded as costs, while claims refer to profits.  We calculate the mean of the cumulative discounted dividends paid until ruin, if the barrier strategy is applied. We formulate associated Hamilton-Jacobi-Bellman equation and identify sufficient conditions for a barrier strategy to be optimal.
Some numerical examples are provided when profits have exponential law.

\vspace{3mm}

\noindent {\sc Keywords.} Dividends $\star$ PDMP $\star$ optimal strategy $\star$ barrier strategy $\star$ integro-differential HJB equation $\star$ Dual model $\star$ stochastic control $\star$ Exit problems

\end{abstract}

\maketitle

\pagestyle{myheadings} \markboth{\sc E.\ Marciniak --- Z.\ Palmowski}
{\sc Dual dividend problem}

\vspace{1.8cm}

\tableofcontents

\newpage



\section{Introduction}\label{sec:intr}
The optimal dividend problem concerns maximizing
the expected discounted cumulative dividend paid up to the time of ruin.
The classical optimal dividend problem has been considered by many authors since De
Finetti \cite{Finetti}, who first introduced the barrier strategy, in which all surpluses above a given level are transferred to a beneficiary,
and raised the question of optimizing its level.
Gerber and Shiu \cite{gerber2004optimal}, Asmussen and Taksar \cite{asmussen1997controlled} and Jeanblanc and Shiryaev \cite{jeanblanc1995optimization} considered the the optimal dividend problem in the Brownian motion setting. Azcue and Muler \cite{azcue2005optimal} and Schmidli \cite{schmidli2006optimisation} studied the optimal dividend strategy under the Cram\'{e}r-Lundberg model using a Hamilton-Jacobi-Bellman (HJB) system of equations. Further Avram et.al. \cite{avram2007optimal, avram2015gerber}, Kyprianou and Palmowski \cite {kyprianou2007distributional}, Loeffen \cite{loeffen2008optimality, loeffen2009optimal}, Loeffen and Renaud \cite{loeffen2010finetti}, Czarna and Palmowski \cite{czarna2010dividend} and many other authors analyze the dividend problem
for the L\'{e}vy risk process using the probabilistic approach.

In this paper, we deal with so-called dual risk process.  In the dual model, in contrast to the classical model, the premium rate is viewed as a cost function and thus it is negative. While claims, on the other hand, should be considered as profits or gains and therefore, they make the surplus increase; see \cite{AvanGerShiu}, \cite{AvanziGerber},  \cite{Yamazaki}, \cite{Ng}.
There are various interpretations for the dual model. For instance, it is appropriate for
the company which pays continuously expenses relevant to research and occasionally gains some random income from some inventions or discoveries.
As an example one can consider pharmaceutical companies, real estate agent offices or brokerage firms that sell mutual funds or insurance products with a front-end load.
For more detailed information, we refer the reader to \cite{AvanGerShiu}.
There is a good deal of work being done on dividend barrier in the dual model under the assumption that the cost function is constant and gains are modeled by a compound Poisson process. Avanzi et al. \cite{AvanGerShiu} considered cases when profits or gains follow an exponential or a mixture of exponential distributions and they derived
explicit formulas for the expected discounted dividend values. They also found the optimal barrier level. The connection between dual and classical model was explained and used by Afonso et al. \cite{Afonso}. Avanzi and Gerber \cite{AvanziGerber} studied a dual model perturbed
by a diffusion. Using Laplace transform method  they determined the optimal strategy among the set of all barrier strategies.
Bayraktar et al. \cite{Yamazaki} used the fluctuation theory to prove the optimality of a barrier strategies for all spectrally positive L\'{e}vy processes.
They characterized the optimal barrier using a so-called scale functions. Moreover, they identify the solution to the dividend problem with capital injections.
Finally Albrecher et al. \cite{Albrecher} using Laplace transforms examined a dual model in the presence of tax payments.

In this paper, we will analyze the dividend problem in a dual model with a
reserve-dependent risk process. We will use the theory of piecewise deterministic Markov processes (PDMP) and martingale properties.
Assuming the absence of transaction costs, we find the corresponding HJB system.
In the next step, we find the optimal barrier strategy and we will
give sufficient conditions for the barrier strategy to be optimal.
The corresponding classical model has been already analyzed in \cite{MarPal}.
In this paper, we also show connections between both models.
As a side product we obtained some exit problems formulas, which can be used to solve problems with capital injections (see \cite{Yamazaki}).

The paper is organized as follows.
In Section \ref{sec:dual}, we introduce the basic
  notation and we describe the dual model we deal with. Section \ref{sec:classical} is dedicated to the corresponding classical model, where we show results on the exit and capital injections problems.
In Section \ref{sec:main}, we present the
  Verification Theorem. We also analyze the barrier strategy and give sufficient conditions for the barrier strategy to be
  optimal among all admissible strategies.
In Section \ref{sec:proofs}, we give all the proofs.
Section \ref{sec:examples} is devoted to some examples.

\section{The Dual Model}\label{sec:dual}

In the dual model the surplus process $R$  (without payment of dividends) with an initial capital $x>0$, solves the following differential equation:
\begin{equation*}
R_t=x-\int_0^t p(R_s)ds+\sum_{k=1}^{N(t)}C_k,
\end{equation*}
where $p$ is a given deterministic, absolutely continuous, positive cost function and
$\left\{C_k\right\}_{i=1}^\infty$ is a sequence of i.i.d. positive random variables with distribution function $F$,
representing the capital injections. Above $N$
is an independent Poisson process with intensity $\lambda>0$ modeling the times at which the capital injections occur.
We assume that $R_t\rightarrow\infty$ a.s. as $t\rightarrow\infty$, $F$ is absolutely continuous with respect of Lebesgue measure with density $f$ and $\E\,C_1<\infty$.

To approach the dividend problem, we consider the controlled surplus process $X^\pi$ satisfying the following stochastic differential equation:
\begin{equation*}
X_t^\pi=x-\int_0^t p(X^\pi_s)ds+\sum_{k=1}^{N(t)}C_k-L_t^\pi,
\end{equation*}
where $\pi$ is a strategy from the class $\Pi$ of all ,,admisibble'' dividend controlls resulting in the cummulative amount of dividends $L_t^\pi$ paid up to time $t$. We say that a dividend strategy $\pi$ is admissible, if the process $L^\pi$ is c\`{a}dl\`{a}g, nondecreasing, starting from zero ($L_{0-}^\pi=0$) and adapted to the natural filtration of the risk process $R$ that satisfies the usual conditions. Moreover, at any time $t$ the dividend payment is smaller than the size of the available reserves ($L^\pi_{t}-L^\pi_{t-}\leq X^\pi_{t-}$).

The object of interest is the averange value of the cumulative discounted dividends paid up to the ruin time:
\begin{equation}\label{vpi}
v_\pi(x):=\E_x\left[\int_0^{\sigma^\pi}e^{-qt}dL^\pi_t\right],
\end{equation}
where $\sigma^\pi:=\inf\{t\geq 0: X^\pi_t\leq 0\}$ is the ruin time and $q\geq 0$ is a given discount rate.
We adopt the convention that $\E_x$ is the expectation with respect to $\p_x(\cdot):=\p(\cdot|X^\pi_{0-}=x)$. Note that unless otherwise stated we write $\sigma$ instead of $\sigma^\pi$ to simplify the notation.

The dividend optimization problem consists in finding the so-called value function $v$ given by
\begin{equation}\label{dualpr}
v(x):=\sup_{\pi\in\Pi}v_\pi(x)
\end{equation}
and the optimal strategy $\pi^*\in\Pi$ such that $$v(x)=v_{\pi^*}(x)\quad\text{for }x\geq 0.$$

In the next subsection we present some of the results for the classical model that can be used to solve the dual problem.

\section{The Classical model}\label{sec:classical}

The connection between the classical and dual model is crucial, since we will use some methods and results already derived for the classical model.

In the classical model, we consider the surplus process $\tR$ (before any regulation) with an initial capital $\tilde{x}>0$, described by the following stochastic differential equation:
\begin{equation*}
\tR_t=\tilde{x}+\int_0^t \tp(\tR_s)ds-\sum_{k=1}^{N(t)}C_k,
\end{equation*}
 where $\tp$ is a given deterministic, positive and absolutely continuous premium function.
Here, the generic $C$ denotes the claim size or lost that arrives according to Poisson process $N$.
Let $\tX_t$ be a reflected process satisfing the equation
\begin{equation*}
\tX_t=\tilde{x}+\int_0^t \tp(\tX_s)ds-\sum_{k=1}^{N(t)}C_k+\tilde{L}_t^{0},
\end{equation*}
where $\Delta \tilde{L}_t^{0}=\left(-\tX_t\right)\mathbb{I}_{\{\tX_t<0\}}$. Note that $\tilde{L}^{0}$ is a nondecreasing adapted process with $\tilde{L}_{0-}^0=0$. Above $\tilde{L}_t^0$ represents the total amount of capital injections up to time $t$.

We will first demonstrate how to identify
\begin{equation*}
\E_{x}\left[\int_0^{\tilde{T}_a^+}e^{-qt}d\tilde{L}^{0}_t\right],
\end{equation*}
where $\tilde{T}_a^+:=\inf\{t\geq 0:\tX_t\geq a\}$.
It is the average discounted value of injected capital paid up to the time of reaching the barrier $a>0$ by the controlled process $\tX$.

Two functions, $\tilde{W}_q$ and $\tilde{G}_{q,w}$, are crucial to solve this problem. There are related with two-sided and one-sided exit problems of $\tR$ in the following way:
\begin{equation}\label{exit1}
\E_x\left[e^{-q\ttau_a^+}\mathbb{I}_{\{\ttau_a^+<\ttau^-_0\}}\right]=\frac{\tilde{W}_q(x)}{\tilde{W}_q(a)},\qquad\text{for }x\in[0,a]
\end{equation}
\begin{equation}\label{exit2}
\tilde{G}_{q,w}(x):=\E_x\left[e^{-q\ttau_{0}^-}w(|\tR_{\ttau^-_{0}}|)\mathbb{I}_{\{\ttau_{0}^-<\infty\}}\right],\qquad\text{for }x>0,
\end{equation}
where $a>0$, $\ttau_a^+=\inf\{t\geq 0: \tR_t\geq a\}$ and $\ttau_a^-=\inf\{t\geq 0: \tR_t< a\}$. Function $\tilde{G}_{q,w}$ is defined for some general positive penalty function $w$.
For the existence and properties of the functions $\tilde{W}_q$ and $\tilde{G}_{q,w}$ we refer the reader to \cite{Corina} and \cite{MarPal}.

Note that $\tR$ is a piecewise deterministic Markov process.
By $\tilde{\cA}$ we denote the full generator of $\tR$, i.e. we have:
$$\tilde{\cA} m(x)=\tp(x)m'(x)+\lambda\int_0^\infty (m(x-z)-m(x))dF(z)$$
acting on absolutely continuous functions $m$ such that
\begin{equation*}\label{domain}
\E\bigg[\sum_{\sigma_i\leq t}|m(\tR_{\sigma_i})-m(\tR_{\sigma_i-})|\bigg]<\infty
\qquad\text{for any}\ t\geq 0.
\end{equation*}
Above $\{\sigma_i\}_{i\in \mathbb{N}\cup\{0\}}$ denotes the times at which the claims occur (see Davis \cite{Davis} and Rolski et al. \cite{Rolski}). Moreover, $m'$ denotes the  density of~$m$.
Note that any function, which is absolutely continuous and ultimately dominated by an affine function, is in the domain of the full generator $\cA$ as a consequence
of the assumption that $\E C_1<\infty$.
Recall that for any function $m$ from the domain of  $\cA$ the process
$$\bigg\{e^{-qt}m(\tR_t) -\int_0^te^{-qs}\left(\tilde{\cA}-q\mathbf{I}\right) m(\tR_s)\,ds, t\geq 0\bigg\}$$
is a martingale.

Marciniak and Palmowski \cite{MarPal} showed that, if the claim size distribution of $C$
is absolutely continuous, then functions $\tilde{W}_q$ and $\tilde{G}_{q,w}$ are differentiable and satisfy equations:
\begin{equation}\label{eq:h}
\tilde{\cA} \tilde{W}_q(x)=q\tilde{W}_q (x)\quad \text{for } x\geq 0,\qquad  \tilde{W}_q (x)=0\quad \text{for } x<0, \quad \tilde{W}_q(0)=1
\end{equation}
and
\begin{equation}\label{eq:G}
\tilde{\cA} \tilde{G}_{q,w}(x)=q\tilde{G}_{q,w} (x) \quad \text{for }x\geq 0,\qquad  \tilde{G}_{q,w} (x)=w(x)\quad \text{for } x<0.
\end{equation}
Now we are drawing our attention to the exit problem for the reflected process $\tilde{X}$. The first passage time of a positive level $a>0$ we denote by
$$T_a^+:=\inf\{t\geq0:\tX_t> a\}.$$
The following result express the Laplace transform of the exit time $T^+_a$ in terms of the functions $\tilde{W}_q$ and $\tilde{G}_{q,1}$.

\begin{theorem}\label{e-qT_a}
Let $a>0$, $x\in[0,a]$ and $q\geq 0$. If $T^+_a<\infty$ $\p$-a.s. then
$$\E_x\left[e^{-qT_a^+}\right]=\tZ(x)/\tZ(a),$$ where
\begin{equation}\label{Z}
\tZ(x) :=\big(1-\tilde{G}_{q,1}(0)\big)\tilde{W}_q(x)+\tilde{G}_{q,1}(x)\qquad\text{for }x\geq 0
\end{equation}
and $\tZ(x)=1$ for $x<0$.
\end{theorem}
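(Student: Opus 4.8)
The plan is to realize $\tZ$ as a function in the domain of the full generator $\tilde{\cA}$ that is harmonic for $\tilde{\cA}-q\mathbf{I}$ on $(0,a)$, reflects trivially at the boundary of $(-\infty,0)$, and is flat ($\tZ'=0$, appropriately read) at the reflecting boundary in the direction that matters for $\tX$, so that a martingale-plus-optional-stopping argument collapses exactly to the claimed ratio. Concretely, first I would verify that $\tZ$ as defined in \eqref{Z} is absolutely continuous and ultimately dominated by an affine function — this follows because $\tilde{W}_q$ and $\tilde{G}_{q,1}$ are each differentiable (by the result of \cite{MarPal} quoted above) and have at most linear growth given $\E C_1<\infty$ — hence $\tZ$ lies in the domain of $\tilde{\cA}$ and the process $M_t:=e^{-qt}\tZ(\tR_t)-\int_0^t e^{-qs}(\tilde{\cA}-q\mathbf{I})\tZ(\tR_s)\,ds$ is a martingale.

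Next I would compute $(\tilde{\cA}-q\mathbf{I})\tZ$. On $x\ge 0$, linearity of $\tilde{\cA}$ together with \eqref{eq:h} gives $(\tilde{\cA}-q\mathbf{I})\big[(1-\tilde{G}_{q,1}(0))\tilde{W}_q\big]=0$, and \eqref{eq:G} with $w\equiv 1$ gives $(\tilde{\cA}-q\mathbf{I})\tilde{G}_{q,1}=0$ on $x\ge 0$; so $(\tilde{\cA}-q\mathbf{I})\tZ(x)=0$ for $x\ge 0$. For $x<0$ we have $\tZ(x)=1$, and I should check the value $\tZ$ takes when the process jumps below $0$: since $\tilde{W}_q(y)=0$ and $\tilde{G}_{q,1}(y)=w(y)=1$ for $y<0$, indeed $\tZ(y)=1$ there, consistent with $\tZ(x)=1$ for $x<0$ in the statement — this is precisely the boundary datum that makes the capital-injection reflection at $0$ invisible to $\tZ$, because the reflected path $\tX$ is pushed back up from level $0$ and $\tZ$ is constant across that region. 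The upshot is that along the reflected process $\tX$ we get $(\tilde{\cA}-q\mathbf{I})\tZ(\tX_s)=0$ for all relevant $s$ before $T_a^+$: when $\tX_s=\tR_s>0$ the generator term vanishes by harmonicity, and the reflection term contributes nothing because $\tZ$ is constant on $(-\infty,0]$ so the local-time push at $0$ does not change $e^{-qt}\tZ(\tX_t)$.

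Then I would apply optional stopping to the martingale at time $T_a^+$ (finite a.s. by hypothesis), obtaining $\tZ(x)=\E_x[e^{-qT_a^+}\tZ(\tX_{T_a^+})]$. Since $\tX$ moves continuously upward through the level $a$ between jumps (premium drift $\tp>0$ is continuous, jumps are downward), $\tX_{T_a^+}=a$ on $\{T_a^+<\infty\}$, so $\tZ(x)=\tZ(a)\,\E_x[e^{-qT_a^+}]$, which is the claim after dividing by $\tZ(a)>0$ (positivity of $\tZ$ on $[0,a]$ follows from $\tilde{W}_q>0$, $1-\tilde{G}_{q,1}(0)\ge 0$ since $\tilde{G}_{q,1}(0)=\E_0[e^{-q\ttau_0^-}]\le 1$, and $\tilde{G}_{q,1}\ge 0$).

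The main obstacle I anticipate is the careful handling of the reflection/local-time term and the integrability needed to justify optional stopping: one must confirm that the construction of $\tX$ via the capital-injection reflection at $0$ really does leave $e^{-qt}\tZ(\tX_t)$ a martingale (not merely a local martingale), i.e. that $\tZ$ being constant on $(-\infty,0]$ kills the boundary term exactly, and that $\{e^{-q(t\wedge T_a^+)}\tZ(\tX_{t\wedge T_a^+})\}$ is uniformly integrable — which should follow from boundedness of $\tZ$ on the compact $[0,a]$ together with the fact that $\tX$ stays in $[0,a]$ up to $T_a^+$. The PDMP generator bookkeeping (ensuring $\tZ$ with its kink at $0$ still lies in the domain, and that the jump part of $\tilde{\cA}$ is correctly evaluated using the $x<0$ values of $\tZ$) is the technical heart; everything else is the standard harmonic-function-meets-optional-stopping template.
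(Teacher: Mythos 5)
Your argument is correct and follows essentially the same route as the paper: show that $\tZ$ is absolutely continuous and in the domain of the generator, that $(\tilde{\cA}-q\mathbf{I})\tZ=0$ on $(0,a)$ by linearity together with \eqref{eq:h} and \eqref{eq:G}, that the reflection at $0$ is invisible to $\tZ$ because $\tZ$ is constant on $(-\infty,0]$, and then apply optional stopping to the bounded (hence UI) martingale $e^{-q(t\wedge T_a^+)}\tZ(\tX_{t\wedge T_a^+})$, using that $\tX_{T_a^+}=a$ by continuous upward crossing. One small terminological slip: the reflection here is not a continuous local-time push at $0$ but a lump-sum reset at jump times (if a claim would take $\tX$ below $0$, the injection brings it exactly to $0$), and the precise reason the correction term vanishes is that in the reflected generator the jump contribution $\lambda(\tZ(0)-\tZ(x))\p(C_1>x)$ coincides with $\lambda\int_x^\infty(\tZ(x-z)-\tZ(x))\,dF(z)$ because $\tZ(x-z)=\tZ(0)$ for $z\ge x$ — which is exactly the cancellation you identify when you flag that the jump part of $\tilde{\cA}$ must be evaluated using the $x<0$ values of $\tZ$.
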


\begin{proof}
By the properties of $ \tilde{G}_{q,1}$ and $\tilde{W}_q$
\begin{equation}\label{eqZ}
(\tilde{\cA}-q\mathbf{I})\tZ(x)=0\quad \text{for all } x> 0,\qquad \tZ(x)=1 \text{ for } x\leq 0.
\end{equation}
Note that $\tZ$ is continuous on $\R$ and continuously differentiable on $\R\setminus \{0\}$ with right and left derivative at $0$. Thus $\tZ$ is absolutly continuous on $\R$. Moreover, $\tZ$ is dominated by some affine function on the set $(-\infty,a)$. Therefore, $\tZ$ is in the domain of the
piecewise deterministic Markov process $\tX_{t\wedge T^+_a}$ (see Davis \cite{Davis} and Rolski et al. \cite{Rolski}).
This means that the process
\begin{eqnarray*}
K_{t\wedge T_a^+}:=e^{-q(t\wedge T^+_a)}\tZ(\tX_{t\wedge T^+_a})-\int_0^{t\wedge T^+_a} e^{-qs}\Big(\tilde{\cA}^X-q\mathbf{I}\Big)\tZ(\tX_{s})\,ds
\end{eqnarray*}
is a martingale for the generator $\tilde{\cA}^X$ of $\tX_{t\wedge T^+_a}$ given by:
$$\tilde{\cA}^Xm(x)=\tilde{p}(x)m'(x)+\lambda\int_0^x(m(x-z)-m(x))dF(z)+\lambda(m(0)-m(x))\p(C_1>x).$$

Note that $\tZ(x-z)=\tZ(0)$ for all $z\geq x$. Thus, by (\ref{eqZ}), we have $K_{t\wedge T_a^+}=e^{-q(t\wedge T^+_a)}\tZ(\tX_{t\wedge T^+_a})$, which
is bounded by $\sup_{x\in[0,a]}\{\tZ(x)\}<\infty$. Hence $K_{t\wedge T_a^+}$ is a uniformly integrable (UI) martingale and
$$\tZ(x)=\E_x[K_{t\wedge T_a^+}]\rightarrow\E_x[K_{T_a^+}]=\tZ(a)\E_x[e^{-qT_a^+}]\text{ as }t\rightarrow\infty.$$
We used in above the assumption that $T_a^+<\infty$ a.s. This completes the proof.
\end{proof}

\begin{remark}\rm
Note that the function $\E_x[e^{-qT_a^+}]$ is increasing on $(0,a)$ for any $a>0$, thus $\tZ(x)$ has to be increasing on $(0,\infty)$.
Moreover, $\tilde{Z}$ is continuous on $\R$ and continuously differentiable on $\R\setminus\{0\}$.
\end{remark}

From \eqref{exit2} and Theorem \ref{e-qT_a}, using the
strong Markov property of $\tR$ and $\tX$, one can
obtain the following crucial identities.

\begin{theorem}\label{tg}
(i) For $x\geq 0$ and $q>0$ it holds that
\begin{eqnarray}
\tg(x)&:=&\E_x\left[\int_0^\infty e^{-qs}d\tilde{L}^{0}_s\right]=-\E_x\left[e^{-q\ttau_{0}^-}\tR_{\ttau_{0}^-}\right]+\E_x\left[e^{-q\ttau_{0}^-}\right]\E_{0}\left[\int_0^\infty e^{-qs}d\tilde{L}^{0}_s\right]\nonumber\\
&&= \tilde{G}_{q,|x|}(x)+\tilde{G}_{q,1}(x)\tg(0).\label{gmale}
\end{eqnarray}
(ii) Let $a\in(0,\infty)$. For all $x\in[0,a)$ we have
\begin{equation}\label{barrier}
\E_x\left[\int_0^{T_a^+}e^{-qs}d\tilde{L}^{0}_s\right]=\tg(x)-\tZ(x)g(a)/\tZ(a).
\end{equation}
\end{theorem}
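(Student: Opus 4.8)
The plan is to exploit the strong Markov property together with the two exit identities already at our disposal: the representation of $\tilde G_{q,w}$ in \eqref{exit2} and the formula $\E_x[e^{-qT_a^+}] = \tZ(x)/\tZ(a)$ from Theorem \ref{e-qT_a}. For part (i), first observe that before the reflected process $\tX$ experiences any capital injection at the origin, it coincides with the free process $\tR$; hence $\tilde L^0$ stays at zero until time $\ttau_0^-$, and at that instant the injection is exactly $(-\tR_{\ttau_0^-})\,\mathbb I_{\{\ttau_0^-<\infty\}}$, which brings the process to $0$. Splitting the integral $\int_0^\infty e^{-qs}\,d\tilde L^0_s$ at $\ttau_0^-$ and using the strong Markov property of $\tX$ (restarted from $0$), together with $\tR_t\to\infty$ a.s.\ so that $\ttau_0^-$ may be infinite with positive probability, gives
\[
\tg(x) = \E_x\!\left[e^{-q\ttau_0^-}(-\tR_{\ttau_0^-})\mathbb I_{\{\ttau_0^-<\infty\}}\right] + \E_x\!\left[e^{-q\ttau_0^-}\mathbb I_{\{\ttau_0^-<\infty\}}\right]\tg(0).
\]
Recognizing the first term as $\tilde G_{q,w}(x)$ with penalty $w(y)=y=|y|$ on the relevant range (since $\tR_{\ttau_0^-}\le 0$, so $-\tR_{\ttau_0^-}=|\tR_{\ttau_0^-}|$) and the second as $\tilde G_{q,1}(x)$ yields \eqref{gmale}. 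One should also note that $\tg(0)$ is finite: this follows from $\E C_1<\infty$ and $\tR_t\to\infty$, which force $\E_0[\int_0^\infty e^{-qs}\,d\tilde L^0_s]<\infty$ for $q>0$.

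For part (ii), I would again split at the first entrance to the origin. On $\{x\in[0,a)\}$, either the process reaches level $a$ before any injection (in which case $\int_0^{T_a^+}e^{-qs}\,d\tilde L^0_s=0$), or it hits $0$ first, pays the injection $|\tR_{\ttau_0^-}|$, and then restarts from $0$; by the strong Markov property the remaining contribution is $e^{-q\ttau_0^-}$ times $\E_0[\int_0^{T_a^+}e^{-qs}\,d\tilde L^0_s]$. Thus
\[
\E_x\!\left[\int_0^{T_a^+}e^{-qs}\,d\tilde L^0_s\right] = \tilde G_{q,|x|}(x) + \tilde G_{q,1}(x)\,\E_0\!\left[\int_0^{T_a^+}e^{-qs}\,d\tilde L^0_s\right].
\]
To evaluate the constant $c_a:=\E_0[\int_0^{T_a^+}e^{-qs}\,d\tilde L^0_s]$, I would use the decomposition of the \emph{full} reward $\tg$: running the injection functional past $T_a^+$ and restarting from the (overshoot-free, by quasi-left-continuity of the PDMP at the deterministic upward crossing) level $a$ gives $\tg(x) = \E_x[\int_0^{T_a^+}e^{-qs}\,d\tilde L^0_s] + \E_x[e^{-qT_a^+}]\tg(a)$ for $x\in[0,a)$. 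Plugging $x=0$ and using $\E_0[e^{-qT_a^+}]=\tZ(0)/\tZ(a)=1/\tZ(a)$ (since $\tZ(0)=(1-\tilde G_{q,1}(0))\tilde W_q(0)+\tilde G_{q,1}(0)=1$) solves for $c_a$; substituting back and simplifying with $\tilde G_{q,|x|}(x)=\tg(x)-\tilde G_{q,1}(x)\tg(0)$ from part (i) collapses the expression to $\tg(x)-\tZ(x)g(a)/\tZ(a)$.

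The main obstacle, and the point deserving care, is the justification that the process crosses level $a$ continuously, i.e.\ $\tX_{T_a^+}=a$ with no overshoot, so that restarting ``from $a$'' is legitimate; this holds because the upward movement of $\tX$ is driven only by the continuous drift $\tp$ (the jumps are downward), but it must be invoked explicitly when applying the strong Markov property at $T_a^+$. A secondary technical point is integrability: one must ensure all the expectations split are finite (so that ``$\tg(x)=$ reward before $T_a^+$ $+$ reward after'' is a genuine additive decomposition and not $\infty-\infty$), which again rests on $\E C_1<\infty$, $q>0$, and $\tR_t\to\infty$. Everything else is a bookkeeping exercise in the strong Markov property and the already-established identities \eqref{exit1}--\eqref{Z}.
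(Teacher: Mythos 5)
Your proof of part (i) is correct and captures the intended strong-Markov argument (the paper itself gives no explicit proof of Theorem \ref{tg}, only a pointer to \eqref{exit2}, Theorem \ref{e-qT_a} and the strong Markov property).

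In part (ii), however, your first decomposition is wrong, and the error matters. You wrote
\[
\E_x\!\left[\int_0^{T_a^+}e^{-qs}\,d\tilde L^0_s\right] = \tilde G_{q,|x|}(x) + \tilde G_{q,1}(x)\,\E_0\!\left[\int_0^{T_a^+}e^{-qs}\,d\tilde L^0_s\right],
\]
but the functionals $\tilde G_{q,w}(x)=\E_x\bigl[e^{-q\ttau_0^-}w(|\tR_{\ttau_0^-}|)\mathbb I_{\{\ttau_0^-<\infty\}}\bigr]$ are taken over the event $\{\ttau_0^-<\infty\}$, not over $\{\ttau_0^-<\tau_a^+\}$; the paths of $\tR$ that reach $a$ first and only later drop below $0$ contribute to $\tilde G_{q,w}(x)$ but contribute nothing to $\int_0^{T_a^+}e^{-qs}\,d\tilde L^0_s$. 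The correct first factor would be $\E_x\bigl[e^{-q\ttau_0^-}|\tR_{\ttau_0^-}|\mathbb I_{\{\ttau_0^-<\tau_a^+\}}\bigr]$, which is not $\tilde G_{q,|x|}(x)$. Concretely, if you substitute $c_a=\tg(0)-\tg(a)/\tZ(a)$ (obtained from your second, correct decomposition at $x=0$) into your incorrect formula and simplify with $\tilde G_{q,|x|}(x)=\tg(x)-\tilde G_{q,1}(x)\tg(0)$ from part (i), you land on $\tg(x)-\tilde G_{q,1}(x)\,\tg(a)/\tZ(a)$, whereas the theorem requires $\tg(x)-\tZ(x)\,\tg(a)/\tZ(a)$. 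These differ by $\bigl(1-\tilde G_{q,1}(0)\bigr)\tilde W_q(x)\,\tg(a)/\tZ(a)$, which is generically nonzero since $\tilde G_{q,1}(0)<1$ for $q>0$. So the ``substituting back'' step does not collapse to the stated answer.

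The fix is to discard the first decomposition entirely: the identity
\[
\tg(x)=\E_x\!\left[\int_0^{T_a^+}e^{-qs}\,d\tilde L^0_s\right]+\E_x\!\left[e^{-qT_a^+}\right]\tg(a),\qquad x\in[0,a),
\]
which you state and justify correctly (no overshoot at $T_a^+$ because upward motion of $\tX$ is purely by drift), combined with $\E_x[e^{-qT_a^+}]=\tZ(x)/\tZ(a)$ from Theorem \ref{e-qT_a}, gives \eqref{barrier} directly at every $x\in[0,a)$, with no need to pass through $x=0$ or part (i). Your discussion of integrability and of the no-overshoot property is appropriate and should be kept.
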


\section{Optimal Dividend Strategy}\label{sec:main}

\subsection{HJB equations}
To prove the optimality of a particular dividend strategy $\pi$ among all admissible strategies $\Pi$ for the dual problem (\ref{dualpr}), we consider the following Hamilton-Jacobi-Bellman (HJB) system:
\begin{eqnarray}\label{hjb}
\begin{split}
\max \left\{(\cA-q\mathbf{I})m(x),1-m'(x))\right\}&=&0\quad\text{for }x>0,\\
 m(x)&=&0 \quad\text{for } x\leq 0,
 \end{split}
\end{eqnarray}
where $\cA$ is the full generator of the piecewise deterministic Markov process $R$ given by:
\begin{equation}\label{cAdual}
\cA m(x)=-p(x)m'(x)+\lambda\int_0^\infty(m(x+y)-m(x))\, dF(y),
\end{equation}
acting on absolutely continuous functions $m$ such that
\begin{equation*}
\E\bigg[\sum_{\sigma_i\leq t}|m(R_{\sigma_i})-m(R_{\sigma_i-})|\bigg]<\infty
\qquad\text{for any}\ t\geq 0,
\end{equation*}
where $\{\sigma_i\}_{i\in \mathbb{N}\cup\{0\}}$ denotes the times at which the claims occur (see Davis \cite{Davis} and Rolski et al. \cite{Rolski}). In this case $m'$ denotes the  density of~$m$.
Recall that any function, which is absolutely continuous and ultimately dominated by an affine function, is in the domain of the full generator $\cA$.

\begin{theorem}\label{verif} \textit{(Verification Theorem)}
Assume that $m:[0,\infty)\rightarrow[0,\infty)$ is continuous and $m(0)=0$. Extend $m$ to the negative half-axis by $m(x)=0$ for $x<0$. Suposse that $m$ is $C^1$ on $(0,\infty)$. If $m$ satisfies (\ref{hjb}), then $m\geq v$ on $(0,\infty)$.
\end{theorem}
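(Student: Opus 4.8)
The plan is to use the standard verification argument for PDMP dividend problems, applying Dynkin's formula / martingale properties to $m$ evaluated along an arbitrary controlled process $X^\pi$ and optimizing. Fix an admissible strategy $\pi\in\Pi$ and $x>0$. First I would extend $m$ to $\R$ as prescribed and check that $m$ is in the domain of the full generator $\cA$ of the PDMP $X^\pi$ stopped at ruin: $m$ is $C^1$ on $(0,\infty)$, continuous at $0$ with $m(0)=0$, and — here one needs a growth bound — from $1-m'(x)\le 0$, i.e. $m'(x)\ge 1$... no: from $(\cA-q\mathbf I)m(x)\le0$ one extracts that $m$ grows at most linearly (the HJB inequality $-p(x)m'(x)+\lambda\int(m(x+y)-m(x))dF(y)\le qm(x)$ together with $\E C_1<\infty$ forces at most affine growth), so $m$ is ultimately dominated by an affine function and hence lies in the domain of $\cA$. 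This puts us in a position to apply the martingale formula recalled in Section~\ref{sec:classical}.

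Next I would write, for the controlled dynamics $X^\pi_t = x - \int_0^t p(X^\pi_s)\,ds + \sum_{k\le N(t)}C_k - L^\pi_t$, the change-of-variables formula for $e^{-qt}m(X^\pi_t)$ along the piecewise-deterministic paths, separating the absolutely continuous drift part (which contributes $\int_0^t e^{-qs}(-p(X^\pi_s)m'(X^\pi_s)-qm(X^\pi_s))\,ds$ plus the dividend part $-\int_0^t e^{-qs}m'(X^\pi_{s-})\,dL^{\pi,c}_s$) and the jump part (the random measure of the $C_k$'s, whose compensator gives $\int_0^t e^{-qs}\lambda\int(m(X^\pi_{s-}+y)-m(X^\pi_{s-}))\,dF(y)\,ds$, plus the dividend jumps $\sum_{s\le t}e^{-qs}(m(X^\pi_s)-m(X^\pi_{s-}))$ over times with $\Delta L^\pi_s>0$). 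Taking expectations and using the martingale property, and stopping at $t\wedge\sigma^\pi$, I get
\begin{eqnarray*}
m(x) &=& \E_x\Big[e^{-q(t\wedge\sigma)}m(X^\pi_{t\wedge\sigma})\Big] - \E_x\Big[\int_0^{t\wedge\sigma}e^{-qs}(\cA-q\mathbf I)m(X^\pi_s)\,ds\Big]\\
&& {}+ \E_x\Big[\int_{[0,t\wedge\sigma]}e^{-qs}m'(X^\pi_{s-})\,dL^{\pi,c}_s\Big] - \E_x\Big[\sum_{0\le s\le t\wedge\sigma}e^{-qs}\big(m(X^\pi_s)-m(X^\pi_{s-})\big)\Big].
\end{eqnarray*}
Now I invoke the two HJB inequalities: $(\cA-q\mathbf I)m\le0$ makes the second term nonnegative; $m'\ge1$ (from $1-m'(x)\le0$) bounds the continuous-dividend term below by $\int e^{-qs}dL^{\pi,c}_s$; and for a dividend jump at $s$, since $0\le \Delta L^\pi_s\le X^\pi_{s-}$ and $m'\ge1$ on $(0,\infty)$ with $m(0)=0$, the mean value theorem gives $m(X^\pi_{s-})-m(X^\pi_s)\ge \Delta L^\pi_s$, so the jump term is $\ge \sum e^{-qs}\Delta L^\pi_s$. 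Combining, $m(x)\ge \E_x[e^{-q(t\wedge\sigma)}m(X^\pi_{t\wedge\sigma})] + \E_x[\int_0^{t\wedge\sigma}e^{-qs}\,dL^\pi_s]$.

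Finally I let $t\to\infty$. The dividend integral converges monotonically to $\E_x[\int_0^{\sigma}e^{-qs}dL^\pi_s]=v_\pi(x)$ by monotone convergence; the boundary term $\E_x[e^{-q(t\wedge\sigma)}m(X^\pi_{t\wedge\sigma})]$ is $\ge0$ since $m\ge0$, and on $\{\sigma<\infty\}$ we have $m(X^\pi_{t\wedge\sigma})\to m(X^\pi_\sigma)$ with $X^\pi_\sigma\le0$ hence $m(X^\pi_\sigma)=0$ — one must be slightly careful when $q=0$ or when $\sigma=\infty$ on a positive-probability event, using the affine growth bound on $m$ together with the a.s.\ finiteness/integrability available here to discard the limit term. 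Hence $m(x)\ge v_\pi(x)$, and taking the supremum over $\pi\in\Pi$ yields $m\ge v$ on $(0,\infty)$.

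The main obstacle I anticipate is the control of the boundary term $\E_x[e^{-q(t\wedge\sigma)}m(X^\pi_{t\wedge\sigma})]$ as $t\to\infty$, in particular showing it does not contribute a negative quantity (it is automatically $\ge 0$, so the real worry is only that the preceding limiting/stopping manipulations are legitimate): this requires the linear growth estimate on $m$ derived from the HJB inequality, plus an integrability argument for $X^\pi_{t\wedge\sigma}$, and a little extra care in the undiscounted case $q=0$. The other technical point needing attention is justifying that $m$ lies in the domain of the full generator of the \emph{stopped, controlled} process (as opposed to the free process $R$), which is handled exactly as in the proof of Theorem~\ref{e-qT_a} above.
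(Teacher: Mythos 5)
Your proof follows exactly the same verification argument as the paper: apply It\^{o}'s formula to $e^{-q(t\wedge\sigma^\pi)}m(X^\pi_{t\wedge\sigma^\pi})$, separate out the generator term, the continuous-dividend term and the jump term, invoke the two HJB inequalities $(\cA-q\mathbf{I})m\le 0$ and $m'\ge 1$ (the latter combined with the mean value theorem to bound the dividend jumps from below by $\Delta L^\pi_s$), take expectations, drop the nonnegative boundary term $\E_x\big[e^{-q(t\wedge\sigma)}m(X^\pi_{t\wedge\sigma})\big]\ge 0$, and pass to the limit by monotone convergence. The only differences are cosmetic: you explicitly record the affine-growth bound needed for $m$ to lie in the domain of the full generator, and you write out the jump-measure/compensator bookkeeping rather than folding it into the generator-plus-martingale decomposition as the paper does; both are sound and do not change the argument.
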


The proof of Theorem \ref{verif} is given in Appendix.

\subsection{Barrier Strategies}

In this subsection, we consider the barrier strategy $\pi_\beta$,
which pays everything above a fixed level $\beta\geq 0$ as dividends.
We find $v_{\pi_\beta}$ defined by (\ref{vpi}).
To this end, we will use methods from the classical model.
\begin{figure}
\begin{center}

\includegraphics[scale=0.7]{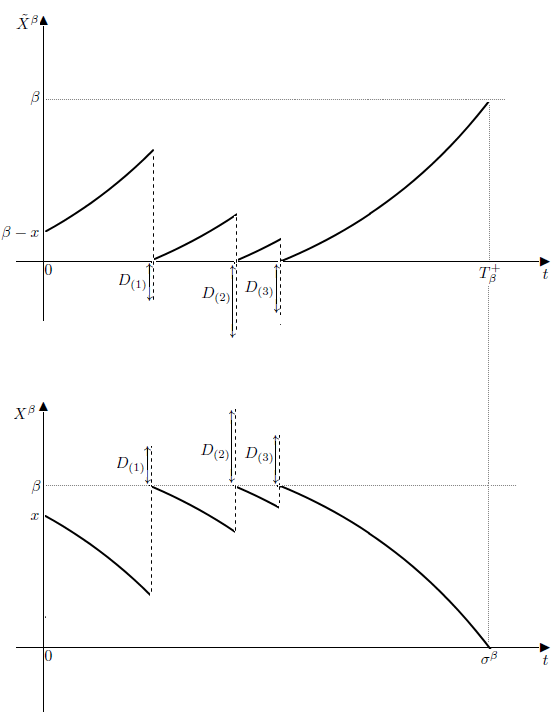}
\end{center}

\caption{Classical vs. dual model. }
\label{figure:fig1}
\vspace{-10pt}
\end{figure}

Consider the classical risk process $\bR$
with
\[\tp(\cdot)=\bp(\cdot):=p(\beta-\cdot),\qquad \tilde{x}=\beta-x.\]
The idea is to transfer the jumps over barrier $\beta$
in the dual model into the injections in the classical model
that happen when it gets below zero.
The Figure~\ref{figure:fig1} shows this connection between the classical and the dual model.
Using this correspondence, by a direct application of the results given in \eqref{barrier}, we have the following
\ of the value function under the barrier strategy.
\begin{theorem}\label{valuebarrier}
We have,
\begin{eqnarray}\label{va}
v_{\pi_\beta}(x):=v_\beta(x) &=&
\begin{cases} \bg(\beta-x)-\bZ(\beta-x)\bg(\beta)/\bZ(\beta)
& 0\leq x \leq \beta,\\
&\\
x-\beta+v_\beta(\beta), & x > \beta,
\end{cases}
\end{eqnarray}
where $\bg (\beta-x):=\tg(\tilde{x})$ is given in \eqref{gmale}.
\end{theorem}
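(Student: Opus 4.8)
The plan is to exploit the pathwise correspondence between the dual model run under the barrier strategy $\pi_\beta$ and the reflected classical model $\bR$ associated with the reversed premium $\bp(\cdot)=p(\beta-\cdot)$ and reversed starting point $\tilde x=\beta-x$, as indicated in Figure~\ref{figure:fig1}. First I would make this correspondence precise. Set $\bX_t := \beta - X^{\pi_\beta}_t$ for the controlled dual process started at $x\in[0,\beta]$. Between jumps, $X^{\pi_\beta}$ decreases at rate $p(X^{\pi_\beta}_s)$, so $\bX$ increases at rate $p(\beta-\bX_s)=\bp(\bX_s)$; a capital injection $C_k$ in the dual model (a positive jump of $X^{\pi_\beta}$) becomes a downward jump of $\bX$ of the same size, i.e.\ a claim $C_k$ in the classical model; and the dividend payments in the dual model, which push $X^{\pi_\beta}$ down to $\beta$ whenever it would exceed $\beta$, become exactly the Skorokhod reflection of $\bX$ at $0$ from below. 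Hence $\bX$ is precisely the reflected classical process $\tX^\beta$ started from $\tilde x=\beta-x$, and $dL^{\pi_\beta}_t = d\tilde L^{0,\beta}_t$ under this identification, where $\tilde L^{0,\beta}$ is the capital-injection (reflection) process for $\bR$.

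Next I would match the stopping times and the discounted functionals. Ruin of the dual process, $\sigma^{\pi_\beta}=\inf\{t\ge0: X^{\pi_\beta}_t\le 0\}$, corresponds under $\bX_t=\beta-X^{\pi_\beta}_t$ to the first time $\bX$ reaches level $\beta$, i.e.\ to $T_\beta^+=\inf\{t\ge0:\tX^\beta_t>\beta\}$ for the classical reflected process. Since the discount factor $e^{-qt}$ is untouched by the spatial transformation,
\[
v_\beta(x)=\E_x\Big[\int_0^{\sigma^{\pi_\beta}}e^{-qt}\,dL^{\pi_\beta}_t\Big]
=\E_{\tilde x}\Big[\int_0^{T_\beta^+}e^{-qt}\,d\tilde L^{0,\beta}_t\Big]
\]
with $\tilde x=\beta-x$. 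Now I apply Theorem~\ref{tg}(ii) with $a=\beta$ to the classical model driven by $\bp$: the right-hand side equals $\tg(\tilde x)-\tZ(\tilde x)\tg(\beta)/\tZ(\beta)$, where $\tg$, $\tZ$ are the objects of Theorems~\ref{e-qT_a}--\ref{tg} built from $\bp$. Writing $\bg(\beta-x):=\tg(\tilde x)$ and $\bZ(\beta-x):=\tZ(\tilde x)$ as in the statement gives $v_\beta(x)=\bg(\beta-x)-\bZ(\beta-x)\bg(\beta)/\bZ(\beta)$ for $0\le x\le\beta$, which is the first branch. For $x>\beta$, the barrier strategy immediately pays the lump sum $x-\beta$ and then restarts from level $\beta$, so $v_\beta(x)=x-\beta+v_\beta(\beta)$; this uses only the admissibility of the initial lump payment (allowed since $x-\beta\le x = X^{\pi_\beta}_{0-}$) and the strong Markov property at time $0$.

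The main obstacle is the rigorous justification of the pathwise identification, in particular that the Skorokhod reflection of $\bX$ at $0$ really produces the same increments as the dividend process $L^{\pi_\beta}$ of the dual barrier strategy, including correct treatment of the lump payment at time $0$ when $x>\beta$ (equivalently $\tilde x<0$) and of any jump that would carry $X^{\pi_\beta}$ above $\beta$. Once this is settled, one must check that the hypothesis ``$T_\beta^+<\infty$ $\p$-a.s.'' needed in Theorem~\ref{e-qT_a} (and implicitly in Theorem~\ref{tg}) holds in the present setting — this follows from the standing assumption $R_t\to\infty$ a.s., which translates into $\bR_t\to-\infty$ and hence $\tX^\beta$ hitting any level $\beta$; alternatively, for $q>0$ the identities extend by monotone convergence even without a.s.\ finiteness. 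Everything else is a direct substitution into \eqref{barrier}.
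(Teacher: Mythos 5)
Your proposal is correct and follows exactly the route the paper intends: the change of variables $\bX_t=\beta-X^{\pi_\beta}_t$ turns the dual barrier-controlled process into the reflected classical process with premium $\bp(\cdot)=p(\beta-\cdot)$, dividends into capital injections, and ruin time $\sigma^{\pi_\beta}$ into $T_\beta^+$, after which \eqref{barrier} gives the first branch and the immediate lump-sum argument gives the second. The paper leaves this as a one-line assertion (``by a direct application of \eqref{barrier}''), so you are simply making explicit the same correspondence; the only point worth tightening is your heuristic that $\bR_t\to-\infty$ forces $T_\beta^+<\infty$ a.s.\ (the regulated process $\tX$ is not $\bR$), but since $\tX$ lives in $[0,\beta]$ with a strictly positive upward drift between Poisson jumps, finiteness of $T_\beta^+$ follows by a standard Borel--Cantelli argument, and your monotone-convergence fallback for $q>0$ also covers it.
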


\begin{remark}\label{newremark}\rm
Note that $v_\beta$ is continuous on $[0,\infty)$ with $v_\beta(0) = 0$.
Note also that $\vb$ solves an integro-differential equation:
\begin{equation}\label{eqvb}
(\cA-q\mathbf{I})\vb(x)=0\qquad\text{for }x\in(0,\beta).
\end{equation}
Moreover, $v_\beta$ is $C^1$ on $(0,\beta)$ and $v_\beta'$, $v_\beta$ are left-contiuous at $\beta$ and right-continuous at $0$.
\end{remark}

We start from the case when optimal barrier is located at $0$, that
is, according to this strategy it is optimal to pay all initial capital as dividends.

\begin{theorem}
If $-p(x)+\lambda\E C_1-qx\leq 0$ for all $x\geq 0$ then $v_0(x)=v(x)$ for all $x$ and the barrier strategy $\pi_\beta$ with $\beta=0$ is optimal.
\end{theorem}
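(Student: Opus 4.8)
The strategy is to apply the Verification Theorem (Theorem \ref{verif}) to the candidate $v_0$. By definition of the barrier strategy at level $\beta=0$, every unit of initial capital is paid immediately as a dividend, so $v_0(x)=x$ for $x\geq 0$ and $v_0(x)=0$ for $x<0$; this is consistent with the second line of \eqref{va} since $v_0(0)=0$. This function is clearly continuous on $[0,\infty)$, vanishes at $0$, and is $C^1$ on $(0,\infty)$ with $v_0'(x)=1$. Hence to conclude $v_0\geq v$, and therefore (since always $v\geq v_{\pi_0}=v_0$) that $v_0=v$ with $\pi_0$ optimal, it suffices to check that $v_0$ satisfies the HJB system \eqref{hjb}.

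The condition $1-v_0'(x)=0$ holds with equality everywhere on $(0,\infty)$, so the ``smooth-fit'' branch of the maximum is met. It remains to verify $(\cA-q\mathbf{I})v_0(x)\leq 0$ for $x>0$. Using \eqref{cAdual} and $v_0(y)=y\vee 0$, one computes
\begin{equation*}
(\cA-q\mathbf{I})v_0(x)=-p(x)+\lambda\int_0^\infty\big((x+y)-x\big)\,dF(y)-qx=-p(x)+\lambda\,\E C_1-qx,
\end{equation*}
which is exactly the quantity assumed to be nonpositive for all $x\geq0$. (Here the integral converges because $\E C_1<\infty$, and $v_0$ is affine hence in the domain of $\cA$.) Thus both entries in the maximum in \eqref{hjb} are $\leq 0$ with the second one attaining $0$, so the maximum equals $0$ on $(0,\infty)$, and $v_0(x)=0$ for $x\leq 0$ by construction. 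The Verification Theorem then gives $m=v_0\geq v$, and combined with $v\geq v_{\pi_0}=v_0$ we obtain $v=v_0$ and optimality of $\pi_0$.

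I do not expect a serious obstacle here; the only mild point is making sure $v_0$ as given by the $x>\beta$ branch of \eqref{va} really is the identity on the positive axis (i.e.\ that $v_\beta(\beta)\to 0$ as $\beta\downarrow 0$, which follows from Remark \ref{newremark} and the fact that the first branch of \eqref{va} vanishes at $x=0$) and that it lies in the domain of the generator so that the computation of $(\cA-q\mathbf{I})v_0$ is legitimate — both are immediate from $\E C_1<\infty$ and affineness. Everything else is a direct substitution into \eqref{hjb}.
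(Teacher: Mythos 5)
Your proof is correct and matches the paper's argument almost verbatim: both take the candidate $h(x)=x$ on $[0,\infty)$ (extended by $0$ on the negative axis), compute $(\cA-q\mathbf{I})h(x)=-p(x)+\lambda\E C_1-qx\leq 0$, note $1-h'(x)=0$, and invoke the Verification Theorem together with $v\geq v_{\pi_0}=h$ to conclude $v=v_0$. Your version is a bit more explicit about the smooth-fit branch and about why $v_0$ is the identity on $[0,\infty)$, but the underlying idea is the same.
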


\proof
Consider $h(x)=x$. We have $$(\cA-q)h(x)=-p(x)+\lambda\int_0^\infty zdF(z)-qx\leq 0.$$ Thus $h$ satisfies (\ref{hjb}), and by Theorem \ref{verif} we have $h\geq v$ on $(0,\infty)$. However, $h$ is the value function for the barrier strategy $\pi_0$. Thus  $h(x)=v(x)$ for all $x\geq 0$ and $\pi_0$ is indeed optimal.
\qed

\noindent

We shall examine the smoothness of $v_\beta$ at $x=\beta$ in order to find a candidate $\beta^*$ for the optimal barrier level.
In particular, we will show that the optimal barrier $\bs$ is given by:
\begin{equation}\label{betastar}
\bs:=\inf\left\{\beta\geq 0:\vb'(\beta-)=1\right\}.
\end{equation}

The next theorem concerns the existence of $\bs$.
\begin{theorem}\label{bs exist}
Assume that $p$ is a $C^1$ function on $[0,\infty)$ such that $\lambda\E C_1>p(0)>0$ and there exists $\hat{x}>0$ such that $p(\hat{x})>\lambda\E C_1-q\hat{x}$.

Then $\bs$ exists and $\bs\in(0,\hat{x})$.
\end{theorem}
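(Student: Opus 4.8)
The plan is to analyze the function $\beta\mapsto \vb'(\beta-)$ and show it is continuous, starts below $1$ near $\beta=0$, and rises above $1$ at $\beta=\hat x$, so that $\bs$ defined by \eqref{betastar} lies in $(0,\hat x)$. First I would extract an explicit expression for $\vb'(\beta-)$ from Theorem~\ref{valuebarrier}. By Remark~\ref{newremark}, on $(0,\beta)$ the function $\vb$ solves $(\cA-q\mathbf I)\vb=0$, i.e.
\begin{equation*}
-p(x)\vb'(x)+\lambda\int_0^\infty(\vb(x+y)-\vb(x))\,dF(y)-q\vb(x)=0 .
\end{equation*}
Letting $x\uparrow\beta$ and using left-continuity of $\vb$, $\vb'$ at $\beta$ together with $\vb(\beta+y)=\vb(\beta)+y$ for $y>0$ (the form of $\vb$ above the barrier from \eqref{va}) and $\vb(\beta)=v_\beta(\beta)$, this gives
\begin{equation*}
p(\beta)\,\vb'(\beta-)=\lambda\int_0^\infty\big(\vb(\beta)+y-\vb(\beta)\big)\,dF(y)-q\vb(\beta)=\lambda\E C_1-q\,\vb(\beta).
\end{equation*}
Hence $\vb'(\beta-)=\big(\lambda\E C_1-q\,v_\beta(\beta)\big)/p(\beta)$, a clean formula I would exploit throughout.

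Next I would check the behavior at the endpoints. At $\beta=0$: since $v_0(x)=x$ (or directly $v_0(0)=0$), the formula gives $v_0'(0-)=\lambda\E C_1/p(0)>1$ by the hypothesis $\lambda\E C_1>p(0)$. Wait — this is the wrong direction for the infimum in \eqref{betastar} to be positive, so I must instead argue that $\vb'(\beta-)$ is \emph{large} for small $\beta$ and check that \eqref{betastar} is still well posed; more carefully, for $\beta$ slightly positive $v_\beta(\beta)$ is small (of order $\beta$, since $v_\beta(0)=0$ and $v_\beta$ is continuous, indeed Lipschitz), so $\vb'(\beta-)\to \lambda\E C_1/p(0)>1$ as $\beta\downarrow 0$. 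At $\beta=\hat x$: I would use the a priori bound $v_\beta(\beta)\ge \beta$ — which holds because paying out the excess immediately and then following $\pi_\beta$ is admissible, or because $v_\beta$ dominates the ``pay everything now'' value; actually the cleanest route is $v_\beta(\beta)\ge v_0(\beta)=\beta$ since at level $\beta$ one may always pay $\beta$ at once and restart, giving at least $\beta$. Wait, that restarts at $0$ where the value is $0$, so this lower bound needs the monotonicity $v(x)\ge x$ which follows from Theorem~\ref{verif}'s companion (any admissible strategy paying lump sum $x$ gives $v(x)\ge x$). Using $v_{\hat x}(\hat x)\ge \hat x$ together with the hypothesis $p(\hat x)>\lambda\E C_1-q\hat x$ yields
\begin{equation*}
v_{\hat x}'(\hat x-)=\frac{\lambda\E C_1-q\,v_{\hat x}(\hat x)}{p(\hat x)}\le \frac{\lambda\E C_1-q\hat x}{p(\hat x)}<1 .
\end{equation*}

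Therefore the set $\{\beta\ge 0:\vb'(\beta-)\le 1\}$ is nonempty (it contains $\hat x$) and the set $\{\beta:\vb'(\beta-)>1\}$ contains a neighborhood of $0$; to conclude that $\bs=\inf\{\beta:\vb'(\beta-)=1\}$ exists and lies in $(0,\hat x)$ I need continuity of $\beta\mapsto \vb'(\beta-)$, which via the formula reduces to continuity of $\beta\mapsto v_\beta(\beta)$ and $p\in C^1$. Continuity of $(\beta,x)\mapsto v_\beta(x)$, hence of $\beta\mapsto v_\beta(\beta)$, follows from the explicit representation \eqref{va} in terms of $\bg$ and $\bZ$, which depend continuously on $\beta$ through $\tp(\cdot)=p(\beta-\cdot)$ and the scale-type functions $\tW_q,\tG_{q,w}$ (continuous dependence on the premium function, as in \cite{Corina},\cite{MarPal}). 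The main obstacle is precisely establishing this continuous — and ideally $C^1$ — dependence of $v_\beta(\beta)$ on $\beta$ rigorously: one must track how $\tW_q$ and $\tG_{q,1}$, solutions of the integro-differential equations \eqref{eq:h}–\eqref{eq:G} with $\beta$-dependent coefficients, vary with $\beta$, and then combine via $\bg(\beta)=\tg(0)$ and \eqref{barrier}. Once that is in hand, the intermediate value theorem gives an actual crossing $\vb'(\beta-)=1$ with $\beta\in(0,\hat x)$, and taking the infimum over such $\beta$ (nonempty, bounded below by the neighborhood of $0$ where $\vb'(\beta-)>1$, hence $\bs>0$) completes the argument.
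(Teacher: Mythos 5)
Your derivation $\vb'(\beta-)=\big(\lambda\E C_1-q\,v_\beta(\beta)\big)/p(\beta)$ is correct and is in fact identical to the paper's identity \eqref{gamma} once one notes $\int_0^\beta u_\beta(y)\,dy = v_\beta(\beta)-v_\beta(0)=v_\beta(\beta)$, and the small-$\beta$ endpoint computation $\gamma(0)=\lambda\E C_1/p(0)>1$ agrees with the paper. However, the step at $\beta=\hat x$ contains a genuine gap: you assert the unconditional bound $v_{\hat x}(\hat x)\ge \hat x$, and none of the justifications you try actually gives it. The ``pay $\beta$ at once and restart'' argument restarts at $0$, contributing nothing, as you yourself notice. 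The fallback $v(x)\ge x$ concerns the \emph{optimal} value function $v$, and since $v_\beta(\beta)\le v(\beta)$ this inequality points the wrong way. In fact $v_\beta(\beta)\ge\beta$ is false in general: for constant premium $p(x)\equiv p_0$, as $\beta\to\infty$ the quantity $v_\beta(\beta)$ tends to a finite constant (the expected discounted overshoots above level $\beta$ depend only on $p_0,\lambda,q,F$, not on $\beta$), while $\beta\to\infty$, so $v_\beta(\beta)<\beta$ eventually. The paper avoids this by proving $v_{\hat x}(\hat x)>\hat x$ only \emph{conditionally}: assuming a contrario that $\gamma(\hat x)>1$, Lemma~\ref{concave} yields that $v_{\hat x}'$ exceeds $1$ on $(0,\hat x)$, whence $v_{\hat x}(\hat x)>\hat x$, and then your own formula forces $\gamma(\hat x)<1$, a contradiction. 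That detour through the concavity lemma is exactly the ingredient your argument is missing.

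A second, lesser issue is that you flag the continuity of $\beta\mapsto\gamma(\beta)$ as ``the main obstacle'' and defer it to continuous dependence of $\tilde{W}_q$, $\tilde{G}_{q,1}$ on the premium function, without establishing it. The paper instead argues continuity directly from the Fredholm representation \eqref{ubeta} for $u_\beta$, showing that any putative jump $\Delta_\beta u_\beta(\beta)$ must satisfy a homogeneous Fredholm equation and hence vanish; this is the route you would need to pursue (or replace with a genuine stability estimate for the scale-type functions) to close the proof.
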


In next result we give sufficient conditions for the barrier strategy to be optimal.
\begin{theorem} \label{T:optimal} Assume that $\bs<\infty$ exists and $p\in C^1$. Let $-p'(x)-q<0$ for all $x\leq\bs$.
Then the barrier strategy $\pi_{\beta^*}$ is optimal and $v(x)=\vbs(x)$ for all $x\geq 0$.
Moreover, in this case $v(x)$ uniquely solves the equation
\eqref{eqvb} with the boundary condition $v^\prime(\beta^*)=1$.
\end{theorem}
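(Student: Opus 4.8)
The plan is to use the Verification Theorem \ref{verif}: it suffices to show that $\vbs$ (extended by $0$ on the negative half-axis) is continuous, vanishes at $0$, is $C^1$ on $(0,\infty)$, and satisfies the HJB inequalities \eqref{hjb}, namely $(\cA-q\mathbf I)\vbs(x)\le 0$ and $\vbs'(x)\ge 1$ for all $x>0$. Once this is established, Theorem \ref{verif} gives $\vbs\ge v$, while $\vbs=v_{\pi_{\beta^*}}\le v$ by definition of the supremum, so $v=\vbs$ and $\pi_{\beta^*}$ is optimal. The continuity and the boundary value $\vbs(0)=0$ are already recorded in Remark \ref{newremark}; the $C^1$ property on $(0,\infty)$ needs to be checked at the single point $x=\beta^*$, where by Remark \ref{newremark} the left derivative is $\vbs'(\beta^*-)$ and the right derivative is $1$ (since $\vbs(x)=x-\beta^*+\vbs(\beta^*)$ for $x>\beta^*$), and these agree precisely because $\beta^*$ is defined by \eqref{betastar} as $\inf\{\beta:\vb'(\beta-)=1\}$ — here one needs $\beta\mapsto\vb'(\beta-)$ to actually attain the value $1$ at $\beta^*$, which should follow from continuity of that map in $\beta$ together with Theorem \ref{bs exist} (or more directly from the structure of $\vb$).

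For $x>\beta^*$ the two HJB conditions are immediate: $\vbs'(x)=1\ge 1$, and $(\cA-q\mathbf I)\vbs(x)=-p(x)\cdot 1+\lambda\int_0^\infty(\vbs(x+y)-\vbs(x))dF(y)-q\vbs(x)$; since $\vbs$ is linear with slope $1$ beyond $\beta^*$ and $\vbs(x)=x-\beta^*+\vbs(\beta^*)$, this reduces to a quantity of the form $-p(x)+\lambda\E C_1 - q(x-\beta^*+\vbs(\beta^*))$, which one shows is $\le 0$ for $x\ge\beta^*$ — this is where the hypothesis about the premium function and the location of $\beta^*$ enters, and it should reduce to (a consequence of) the condition that makes $\beta^*$ optimal rather than too small. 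For $x\in(0,\beta^*)$, equation \eqref{eqvb} gives $(\cA-q\mathbf I)\vbs(x)=0$ outright, so the first HJB condition holds with equality; the real work is the second condition, $\vbs'(x)\ge 1$ on $(0,\beta^*)$.

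The main obstacle is therefore proving $\vbs'(x)\ge 1$ for $x\in(0,\beta^*)$. The natural route is to study $\psi(x):=\vbs'(x)$ and show it cannot drop below $1$ before $\beta^*$. Differentiating \eqref{eqvb} — legitimate because $p\in C^1$ and the hypothesis $-p'(x)-q<0$ guarantees enough regularity — yields an integro-differential relation for $\psi$; one examines the sign of $\psi'$ at any point where $\psi=1$. At such a point $x_0<\beta^*$, using $(\cA-q\mathbf I)\vbs\equiv 0$ on $(0,\beta^*)$ and the monotonicity/convexity features of $\vbs$ inherited from the exit identities of Theorem \ref{tg} and the fact that $\tZ$ (hence $\bZ$) is increasing, one derives $\psi'(x_0)<0$ using exactly the sign condition $-p'(x)-q<0$; since $\psi(\beta^*-)=1$ and $\psi$ would have to come back up to $1$, this forces $\psi>1$ on all of $(0,\beta^*)$, contradicting any earlier crossing. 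Making this ``first crossing'' argument rigorous — in particular justifying the differentiation of the integral term and pinning down the sign of the resulting expression for $\psi'$ at a contact point — is the crux; the hypothesis $-p'(x)-q<0$ for $x\le\beta^*$ is precisely what is needed to close it. Uniqueness of the solution to \eqref{eqvb} with $v'(\beta^*)=1$ then follows because two such solutions would have the same value and derivative at $\beta^*$ and satisfy the same first-order (in the deterministic direction) integro-differential equation on $(0,\beta^*)$, so a Gronwall-type estimate forces them to coincide.
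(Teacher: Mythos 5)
Your overall architecture matches the paper's: extend $\vbs$ by $0$ on $(-\infty,0]$, verify it satisfies the HJB system \eqref{hjb}, invoke the Verification Theorem \ref{verif} to get $\vbs\ge v$, and combine with $\vbs\le v$ from the definition of the supremum. You also correctly identify the handling of $x>\bs$ via the affine form of $\vbs$ there, the $C^1$-matching at $x=\bs$, and a reasonable (Gronwall-type) route to uniqueness — the paper instead uses a direct It\^o/martingale computation (Lemma \ref{hbeta}), but both work.

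The genuine gap is exactly the step you flag as the crux, and the device you propose does not close it. To show $\vbs'(x)\ge 1$ on $(0,\bs)$ you suggest examining $\psi:=\vbs'$ at a contact point $x_0$ with $\psi(x_0)=1$. Differentiating \eqref{eqvb} gives
\[
p(x)\psi'(x)=\bigl(-p'(x)-\lambda-q\bigr)\psi(x)+\lambda\int_0^{\bs-x}\psi(x+z)f(z)\,dz+\lambda\int_{\bs-x}^\infty f(z)\,dz,
\]
and to force $\psi'(x_0)<0$ you must dominate $\int_0^{\bs-x_0}\psi(x_0+z)f(z)\,dz$ by $\int_0^{\bs-x_0}f(z)\,dz$, i.e.\ you need $\psi\le 1$ on $(x_0,\bs)$. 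A ``first crossing from below'' gives information about $\psi$ to the \emph{left} of $x_0$, not to the right, and a ``last crossing before $\bs$'' gives $\psi\ge1$ to the right, which bounds the integral in the wrong direction. You appeal to ``monotonicity/convexity features of $\vbs$ inherited from the exit identities and the fact that $\tZ$ is increasing,'' but concavity of $\vbs$ does \emph{not} follow from monotonicity of $\tZ$; it is itself the nontrivial content. The paper closes this via Lemma \ref{concave}, which works one derivative higher: first establish $\vbs''(\bs-)<0$ directly from $\psi(\bs-)\ge1$ and $-p'(\bs)-q<0$; then argue by contradiction, taking the largest zero $\hat x$ of $\vbs''$ in $(0,\bs)$, on whose right $\vbs$ is already concave so that $\psi(\hat x+z)<\psi(\hat x)$ and $\psi(\hat x)\ge1$, which makes the integral estimate close and yields $0=p(\hat x)\vbs''(\hat x)<(-p'(\hat x)-q)\psi(\hat x)<0$. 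Concavity together with $\psi(\bs-)=1$ then gives $\psi\ge 1$ on $(0,\bs)$. Without this second-derivative argument (or an equivalent), the inequality $\vbs'\ge1$ — and hence the whole verification — is not established.
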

\begin{remark}\rm
Note that in the case when $p(x)=p$ is constant all assumptions are satisfied and the barrier strategy is always optimal.
It was already proved in \cite{Yamazaki}.
In the case of general premium function we conjecture that this is not always true (even if
$\beta^*$ is well-defined). Unfortunately, due to complexity of the HJB  equation in this case, it's difficult to construct a counterexample.
\end{remark}

\section{Examples}\label{sec:examples}
In this section, we assume that injection size $C_1$ has an exponential distribution
with a parameter $\mu>0$. Then the equation (\ref{eqvb}) can be transformed into
\begin{equation}\label{eq:vb}
-p(x)\vb''(x)+(\mu p(x)-p'(x)-\lambda-q)\vb'(x)+\mu q \vb(x)=0
\end{equation}
with the initial conditions $\vb(0)=0$ and
$$p(0)\vb'(0)=\lambda\mu\int_0^\beta \vb(z)e^{-\mu z}\,dz+\mu \int_\beta^\infty(z-\beta+\vb(\beta))e^{-\mu z}\,dz.$$

\textit{Example 1.} Consider an increasing rational cost function given by $$p_1(x)=c\big(2-(1+x)^{-1}\big).$$
This premium function tends to constant $2c$ for large present reserves
and it is within range $[c,2c]$.
Solving equation (\ref{eq:vb}) numerically, we can identify $\beta^*$ and add later some observations.
For calculations we used representation \eqref{va} of the equation for the barrier value function. We skip details here.
Note also that for $c>\lambda/\mu$ from Theorem \ref{bs exist} it follows that the barrier level $\bs$ is well-defined
and by Theorem \ref{T:optimal} the barrier strategy $\pi_{\beta^*}$ is optimal.
In the Tables 1, 2 and 3 we present numerical results.

\begin{table}[h]

\caption{ \small Dependence $\bs$ of $c$.}
\begin{center}
\begin{tabular}{c|c|c|c|c|c|c|c|c}
\hline
\multicolumn{9}{c}{$q=0.1$, $\mu=0.01$, $\lambda=0.1$}\\ \hline
$c$ & 1 & 1.4& 1.6& 2& 2.6& 3    & 4   &4.5   \\ \hline
$\bs$ & 26.5 &32.2& 34.25& 37.1 &38.3& 37.1&26.6&  17.5 \\ \hline
\end{tabular}
\end{center}

\bigskip
\caption{ \small Dependence $\bs$ of $q$.}
\begin{center}
\begin{tabular}{c|c|c|c|c|c|c|c}
\hline
\multicolumn{8}{c}{$\mu=0.01$, $\lambda=0.1$, $c=2$}\\ \hline
$q$ & 0.08& 0.09&0.1 & 0.12 &0.14&0.15&0.17\\ \hline
$\bs$ &49.4 & 42.3 &37.1 & 29.8 & 25.5&23.2&20.25\\ \hline
\end{tabular}
\end{center}

\bigskip
\caption{ \small Dependence $\bs$ of $\mu$.}
\begin{center}
\begin{tabular}{c|c|c|c|c|c|c|c|c}
\hline
\multicolumn{9}{c}{ $q=0.1$,$\lambda=0.1$, $c=2$}\\ \hline
$\mu$   &0.005 & 0.007 & 0.01 & 0.015 & 0.017&0.02& 0.025&0.27\\ \hline
$\bs$ &54.7& 46.6 & 37.1 &24.2 &19.7& 13.1 &4.65&2.93 \\ \hline
\end{tabular}
\medskip
\end{center}

\end{table}

\begin{figure}
\begin{center}

\includegraphics[scale=0.8]{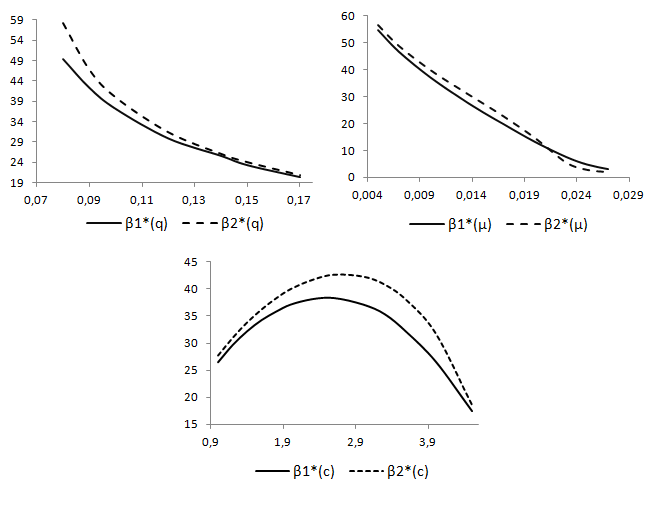}
\end{center}
\vspace{-20pt}
\caption{  The optimal barrier as a function of $q$, $\mu$ and $c$ respectively with $\bs 1$ corresponding to the model with $p_1$ premium function and $\bs 2$ to the model with $p_2$.  }
\label{figure:fig2}
\vspace{20pt}

\end{figure}

\textit{Example 2.}
In this example we investigate the exponential-like cost function $$p_2(x)=2c(1+e^{-x})^{-1}$$
which produces quicker convergence to the constant than in the previous example. Still, we expect similiar properties as both functions are increasing with the same range of values between $c$ and $2c$.
Again the equation (\ref{eq:vb}) produces numerical values of barrier level $\beta^*$ which
is by Theorems \ref{bs exist} and \ref{T:optimal} well-defined and optimal within all admissible strategies.
Figure~\ref{figure:fig2} compare $\bs$ in both examples and as predicted respective values are close.

From above numerical analysis it follows that there is a huge impact on $\beta^*$ of the parameters $q$, $\mu$ and $c$.
Moreover, the optimal barrier level $\beta^*$ decreases rapidly with $\mu$ increasing as well as with increasing $q$ in both cases.
We can also observe concavity of the optimal barrier level with respect to $c$.

\textit{Example 3.}
Let's consider a decreasing premium function given by $$p_3(x)=c+\frac{0.1}{1+x}.$$ Theorem \ref{T:optimal} implies that with $q>0.1$ the barrier strategy $\bs$ is optimal among all admissible strategies. It seems that when it comes to properties of  $\bs$, they remaind the same as in the previous examples (see Figure~\ref{figure:fig3}).

\begin{figure}
\begin{center}
\bigskip
\includegraphics[scale=0.8]{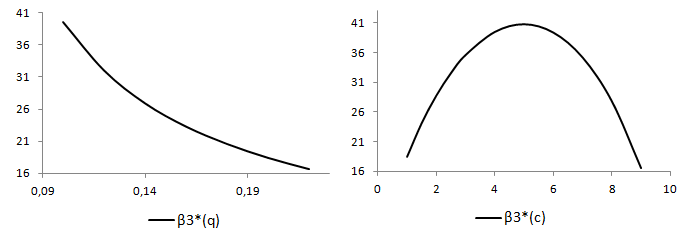}
\end{center}
\caption{  The optimal barrier as a function of $q$ and $c$ respectively with $\bs 3$ corresponding to the premium function $p_3$.}
\label{figure:fig3}
\vspace{-10pt}
\end{figure}
\newpage

\section{Appendix}\label{sec:proofs}

\subsection{Proof of the Verification Theorem \ref{verif}}
 Fix an arbitrary $x\in(0,\infty)$ and $\pi\in\Pi$. By $\{t_n\}_{n=1}^{\infty}$ denote all jumping times for $L^{\pi}$.  Since $m$ is $C^1(0,\infty)$ and $X^{\pi}_{t\wedge \sigma^\pi}\in[0,\infty)$, we are  allowed to use It\^{o}'s formula to the process $Y_t:=e^{-q(t\wedge \sigma^\pi)}m(X^\pi_{t\wedge\sigma^\pi})$, which gives:
\begin{eqnarray*}
Y_t-Y_0&=&\int_0^{t\wedge\sigma^\pi}e^{-qs}(\cA -q\mathbf{I})m(X^\pi_{s-})ds+M_t-\int_0^{t\wedge \sigma^\pi}e^{-qs}dL^{\pi,c}_s\\
&&+\sum_{0\leq t_n\leq t\wedge \sigma^\pi}e^{-qt_n}[m(X^\pi_{t_n-}+C_{N_{t_n}}\Delta N_{t_n}-\Delta L^\pi_{t_n})-m(X^\pi_{t_n-}+C_{N_{t_n}}\Delta N_{t_n})],
\end{eqnarray*}
where $L^{\pi,c}$ denotes the continuous part of $L^{\pi}$ and $M_t$ is a martingale with $M_0=0$.
Since $m$ satisfies (\ref{hjb}) we have
\begin{eqnarray*}
Y_t-Y_0&\leq& \int_0^{t\wedge\sigma^\pi}e^{-qs}(\cA -q\mathbf{I})m(X^\pi_{s-})ds+M_t-\int_0^{t\wedge \sigma^\pi}e^{-qs}dL^{\pi,c}_s-\sum_{0\leq t_n\leq t\wedge \sigma^\pi}e^{-qt_n}\Delta L^\pi_{t_n}\\
&\leq & -\int_0^{t\wedge \sigma^\pi}e^{-qs}dL^{\pi}_s+M_t.
\end{eqnarray*}
Taking expectations and using the fact that $m\geq 0$, we obtain
\begin{eqnarray*}
m(x)&\geq & \E_x \big[e^{-q(t\wedge \sigma^\pi)}m(X^\pi_{t\wedge\sigma^\pi})\big]+\E_x\int_0^{t\wedge \sigma^\pi}e^{-qs}dL^\pi_{s}\\
&\geq &\E_x\int_0^{t\wedge \sigma^\pi}e^{-qs}dL^\pi_{s}.
\end{eqnarray*}
Letting $t\rightarrow\infty$ and applying the Monotone Covergence Theorem gives:
\begin{eqnarray*}
m(x)&\geq &v_\pi(x).
\end{eqnarray*}
Therefore, since $\pi\in\Pi$ and $x$ were arbitrary, we proved the desired inequality \linebreak $m(x)\geq v(x)$ for all $x\in[0,\infty)$.
\qed

\noindent

\subsection{Additional facts}
Before we prove the main results of this paper we demonstrate
few auxiliary facts used in these proofs.

\begin{lemma}\label{hbeta}
Let $\beta\geq 0$. Assume that we have a function $h_\beta:\R\rightarrow [0,\infty)$, such that $h_\beta(x)=x-\beta+h_\beta(\beta)$ for all $x>\beta$ and $h_\beta(x)=0$ for all $x\leq 0$. If the function $h_\beta(x)$ solves the equation $$(\cA-q\mathbf{I})h_\beta(x)=0\qquad\text{ for } 0<x<\beta$$ with the boundary conditions $h_\beta'(\beta)=1$, then $$h_\beta(x)=v_\beta(x)\qquad\text{ for all }x\geq 0.$$
\end{lemma}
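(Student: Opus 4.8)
The plan is to show that the hypothesized function $h_\beta$ coincides with $v_\beta$ by exhibiting $h_\beta$ as the expected discounted dividend payout under the barrier strategy $\pi_\beta$, using the martingale property of PDMPs together with the boundary condition $h_\beta'(\beta)=1$. First I would check that $h_\beta$ is admissible for the PDMP machinery: by assumption it is continuous on $\R$ (the boundary conditions at $0$ and $\beta$ force continuity there, since $h_\beta(0)=0$ and $h_\beta(\beta)$ matches the affine piece), it is $C^1$ on $(0,\beta)$ by the integro-differential equation, $C^1$ (indeed affine) on $(\beta,\infty)$, and the condition $h_\beta'(\beta)=1$ matches the slope of the affine part, so $h_\beta$ is $C^1$ on $(0,\infty)$ and dominated by an affine function. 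Hence $h_\beta$ lies in the domain of the full generator $\cA$ of $R$, and the process
\[
K_{t}:=e^{-q(t\wedge\sigma)}h_\beta(X^{\pi_\beta}_{t\wedge\sigma})-\int_0^{t\wedge\sigma}e^{-qs}(\cA-q\mathbf{I})h_\beta(X^{\pi_\beta}_{s})\,ds
\]
is a (local) martingale, where $\sigma$ is the ruin time under $\pi_\beta$.

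Next I would compute the generator term along the controlled trajectory. Under the barrier strategy $\pi_\beta$ the process $X^{\pi_\beta}$ lives in $[0,\beta]$: it drifts downward at rate $p$, jumps up at the Poisson epochs, and any overshoot above $\beta$ is immediately paid out as dividend, producing the local-time-type push-down $L^{\pi_\beta}$. Applying It\^o's formula for PDMPs to $e^{-q(t\wedge\sigma)}h_\beta(X^{\pi_\beta}_{t\wedge\sigma})$, and using the equation $(\cA-q\mathbf{I})h_\beta(x)=0$ on $(0,\beta)$ together with $h_\beta'(\beta)=1$ to handle the reflection term (each unit of dividend paid reduces $h_\beta$ by exactly one unit, since the slope at $\beta$ is $1$), I obtain
\[
e^{-q(t\wedge\sigma)}h_\beta(X^{\pi_\beta}_{t\wedge\sigma})+\int_0^{t\wedge\sigma}e^{-qs}\,dL^{\pi_\beta}_s = h_\beta(x)+\widetilde{M}_t
\]
for a martingale $\widetilde M_t$ with $\widetilde M_0=0$. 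Taking expectations kills the martingale term and gives $h_\beta(x)=\E_x[e^{-q(t\wedge\sigma)}h_\beta(X^{\pi_\beta}_{t\wedge\sigma})]+\E_x\int_0^{t\wedge\sigma}e^{-qs}\,dL^{\pi_\beta}_s$.

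Finally I would let $t\to\infty$. The integral term converges to $\E_x\int_0^{\sigma}e^{-qs}\,dL^{\pi_\beta}_s=v_\beta(x)$ by monotone convergence. For the remaining boundary term, since $0\le h_\beta\le \beta$ on $[0,\beta]$, it is bounded and $e^{-q(t\wedge\sigma)}h_\beta(X^{\pi_\beta}_{t\wedge\sigma})\to e^{-q\sigma}h_\beta(X^{\pi_\beta}_{\sigma})\mathbb{I}_{\{\sigma<\infty\}}$; on $\{\sigma<\infty\}$ we have $X^{\pi_\beta}_{\sigma}\le 0$ so $h_\beta(X^{\pi_\beta}_\sigma)=0$, and on $\{\sigma=\infty\}$ the factor $e^{-qt}$ forces the term to $0$ when $q>0$, while when $q=0$ one argues via $R_t\to\infty$ that ruin is handled separately or the statement is read with $q>0$ as elsewhere in the paper. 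By dominated (or bounded) convergence that term vanishes, yielding $h_\beta(x)=v_\beta(x)$ for all $x\ge 0$. The main obstacle I anticipate is the careful bookkeeping of the reflection/dividend term in the It\^o formula at the barrier $\beta$ — making precise that the jump of $L^{\pi_\beta}$ exactly cancels against the change in $h_\beta$ because of the slope-one condition, and that no extra local-time contribution at $\beta$ is lost; this is where the hypothesis $h_\beta'(\beta)=1$ is essential, mirroring the role it plays in Theorem \ref{valuebarrier}.
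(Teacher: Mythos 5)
Your proof takes essentially the same route as the paper's: apply the PDMP Dynkin/It\^{o} formula to $e^{-q(t\wedge\sigma)}h_\beta(X^{\pi_\beta}_{t\wedge\sigma})$, kill the drift term via $(\cA-q\mathbf{I})h_\beta=0$ on $(0,\beta)$, convert the overshoot contribution into $-\Delta L^{\pi_\beta}$ using the affine form $h_\beta(x)=x-\beta+h_\beta(\beta)$ above $\beta$, and let $t\to\infty$. One small slip worth fixing: the pointwise bound $0\le h_\beta\le\beta$ on $[0,\beta]$ need not hold in general; what you actually need, and what follows from the hypothesized continuity of $h_\beta$ on the compact interval $[0,\beta]$, is just that $h_\beta$ is bounded there, which is enough for the dominated-convergence step.
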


 \textit{Proof of Lemma \ref{hbeta}.} Denote $X^\beta:=X^{\pi_\beta}$. Take an arbitrary $x\in[0,\beta]$. Applying Ito's formula for $e^{-qt}h_\beta(X^\beta_t)$, we obtain
\begin{eqnarray*}
\E_x\left[e^{-q(t\wedge\sigma^\beta)}h_\beta(X^\beta_{t\wedge\sigma^\beta})\right]&=&h_\beta(x)+\E_x\left[\int_0^{t\wedge\sigma^\beta} e^{-qs}(\cA-q\mathbf{I})h_\beta(X_s^\beta)ds\right]+\E_x\left[\int_0^{t\wedge\sigma^\beta}e^{-qs}h_\beta(X^\beta_s)dL^{\beta,c}_s\right]\notag\\
&&+\E_x\left[\sum_{0\leq s\leq {t\wedge\sigma^\beta}}e^{-qs}\left(h_\beta(X_{s-}^\beta-\Delta L_s^\beta)-h_\beta(X_{s-}^\beta)\right)\mathbb{I}_{\Delta L_s^\beta>0}\right]\\
&=&h_\beta(x)+\E_x\left[\int_0^{t\wedge\sigma^\beta} e^{-qs}(\cA-q\mathbf{I})h_\beta(X_s^\beta)ds\right]-\E_x\left[\sum_{0\leq s\leq {t\wedge\sigma^\beta}}e^{-qs}\Delta L_s^\beta\mathbb{I}_{\Delta L_s^\beta>0}\right].
\end{eqnarray*}
Note that between the positive jumps the process $X^\beta$ is decreasing and hence $L^{\beta,c}\equiv 0$. Moreover, $X_{s-}^\beta>\beta$ on $\{\Delta L^\beta_s>0\}$ and $h_\beta(X_{s-}^\beta-\Delta L_s^\beta)=h_\beta(\beta)$. Hence, after rearranging and letting $t\rightarrow\infty$, we get
$$h_\beta(x)=\E_x\left[\int_0^{\sigma^\beta}e^{-qs}dL^\beta_s\right]=v_\beta(x).$$
This completes the proof.
\qed

\noindent

\begin{lemma} \label{concave}
Assume that $p$ is $C^1$ on $(0,\infty)$ and that $-p'(x) -q<0$ for all $x\in(0,\beta]$. If $v_\beta'(\beta-)\geq 1$
then $\vb(x)$ is in $C^2$ on $(0,\beta)$ and it is increasing and concave on $(0,\infty)$.
\end{lemma}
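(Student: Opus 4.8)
Write $\vb$ for the value function of the barrier strategy. I would first record what Remark~\ref{newremark} and Theorem~\ref{valuebarrier} give: $\vb$ is continuous on $[0,\infty)$, vanishes on $(-\infty,0]$, is affine with slope $1$ on $(\beta,\infty)$ (so $\vb'\equiv 1$ there), is $C^1$ on $(0,\beta)$ with finite one‑sided limits $\vb'(0+),\vb'(\beta-)$, and solves \eqref{eqvb}, i.e.
\begin{equation}\label{pl:ide}
p(x)\vb'(x)=\lambda\int_0^\infty \vb(x+y)\,dF(y)-(\lambda+q)\vb(x),\qquad x\in(0,\beta).
\end{equation}
Since $\vb'$ is bounded on $(0,\beta)$ and $\equiv 1$ on $(\beta,\infty)$, $\vb$ is globally Lipschitz, so by dominated convergence $x\mapsto\int_0^\infty\vb(x+y)\,dF(y)$ is $C^1$ on $(0,\beta)$ with derivative $\int_0^\infty\vb'(x+y)\,dF(y)$, which is continuous in $x$; as $p\in C^1$ with $p>0$, dividing \eqref{pl:ide} by $p(x)$ shows $\vb'\in C^1(0,\beta)$, i.e.\ $\vb\in C^2(0,\beta)$. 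Differentiating \eqref{pl:ide} and using $\int dF=1$ gives the identity I will exploit, valid on $(0,\beta)$:
\begin{equation}\label{pl:ide2}
p(x)\vb''(x)+\bigl(q+p'(x)\bigr)\vb'(x)=\lambda\int_0^\infty\bigl(\vb'(x+y)-\vb'(x)\bigr)\,dF(y).
\end{equation}
By \eqref{pl:ide2} and $p>0$, $\vb''$ extends continuously to $[0,\beta]$; letting $x\uparrow\beta$, so that the integral tends to $1-\vb'(\beta-)$, yields $p(\beta)\vb''(\beta-)=\lambda-\bigl(\lambda+q+p'(\beta)\bigr)\vb'(\beta-)<0$ because $\vb'(\beta-)\ge 1$ and $q+p'(\beta)>0$; hence $\vb''(\beta-)<0$.

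Next I would show $\vb'>0$ on $(0,\beta)$ by a minimum principle. Extend $\vb'$ continuously to $[0,\beta]$ and let $m$ be its minimum there; since $\vb\ge 0=\vb(0)$ we have $\vb'(0+)\ge 0$, and $\vb'(\beta-)\ge 1$. If $m<0$ the minimum is attained at an interior $x_0$, where $\vb''(x_0)=0$, so \eqref{pl:ide2} reads $(q+p'(x_0))m=\lambda\int_0^\infty(\vb'(x_0+y)-m)\,dF(y)\ge 0$ (each integrand is $\ge 0$, as $m$ is the minimum and $\vb'\equiv 1>m$ beyond $\beta$), contradicting $q+p'(x_0)>0$. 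Hence $m\ge 0$; and $m=0$ is impossible too: it cannot occur at $0$ (where $\vb'(0+)=\lambda\int_0^\infty\vb(y)\,dF(y)/p(0)>0$ since $F$ is non‑degenerate and $\vb\to\infty$) nor at $\beta$ (where $\vb'(\beta-)\ge1$) nor at an interior $x_0$ (there $\vb'(x_0)=\vb''(x_0)=0$ would force $\int_0^\infty\vb'(x_0+y)\,dF(y)=0$ in \eqref{pl:ide2}, impossible as $\vb'\equiv1$ on $(\beta,\infty)$). So $\vb'>0$ on $(0,\beta)$, and together with $\vb'\equiv 1$ beyond $\beta$ this makes $\vb$ increasing on $(0,\infty)$.

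The heart of the argument is concavity, i.e.\ $\vb''\le 0$ on $(0,\beta)$, and the key is the one‑sided bound: for $x\in(0,\beta)$ and $y>0$,
\[
\vb'(x+y)-\vb'(x)\ \le\ \int_x^{(x+y)\wedge\beta}\vb''(t)\,dt,
\]
with equality when $x+y\le\beta$, and when $x+y>\beta$ because then the left side is $1-\vb'(x)\le \vb'(\beta-)-\vb'(x)=\int_x^\beta\vb''(t)\,dt$, using $\vb'(\beta-)\ge 1$. Inserting this into the right side of \eqref{pl:ide2} gives a ``backward propagation'' fact: \emph{if $\vb''\le 0$ throughout some $(x,\beta)$, then $\vb''(x)<0$} — for then the right side of \eqref{pl:ide2} is $\le 0$, so $p(x)\vb''(x)\le-(q+p'(x))\vb'(x)<0$ by the previous paragraph and the hypothesis $q+p'(x)>0$ for $x\le\beta$. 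To conclude, suppose $A:=\{x\in(0,\beta):\vb''(x)>0\}\neq\emptyset$ and put $x^*:=\sup A$; since $\vb''<0$ on a left‑neighbourhood of $\beta$ (as $\vb''(\beta-)<0$) we get $x^*\in(0,\beta)$, and $\vb''\le 0$ on $(x^*,\beta)$, so backward propagation gives $\vb''(x^*)<0$, contradicting $\vb''(x^*)\ge 0$ (continuity of $\vb''$ and $x^*=\sup A$). Hence $\vb''\le 0$ on $(0,\beta)$.

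Finally I would assemble the pieces: $\vb$ is affine (so concave) on $(\beta,\infty)$, $\vb''\le 0$ on $(0,\beta)$, $\vb$ is continuous at $\beta$, and $\vb'(\beta-)\ge 1=\vb'(\beta+)$, so $\vb'$ is non‑increasing across $\beta$; hence $\vb'$ is non‑increasing on all of $(0,\infty)$ and $\vb$ is concave there, which together with the second paragraph proves the lemma. The one genuinely load‑bearing step is spotting the one‑sided bound, which turns the linear identity \eqref{pl:ide2} into a comparison that can be propagated from the known sign of $\vb''(\beta-)$ down to $0$; the measure‑theoretic points (that the integrals above do not degenerate) are routine and are handled via $\vb'\equiv1$ on $(\beta,\infty)$ and non‑degeneracy of $F$.
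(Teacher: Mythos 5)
Your proof is correct, and the concavity argument at its core — locating a ``rightmost bad point'' via $x^* = \sup\{x:\vb''(x)>0\}$ and using the differentiated integro-differential equation to force $\vb''(x^*)<0$ from the already-established sign of $\vb''$ on $(x^*,\beta)$ — is essentially the same contradiction argument the paper uses (the paper takes $\hat{x}$ with $\vb''(\hat{x})=0$ and $\vb''<0$ on $(\hat{x},\beta)$; your ``one-sided bound'' formalizes exactly the estimate the paper makes term by term on the integral). You also reach $\vb''(\beta-)<0$ in the same way, and your observation that $\vb\in C^2$ follows from dividing the integral equation by $p$ matches the paper.

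Where you genuinely diverge is in proving monotonicity. The paper shows $\vb$ is strictly increasing on $(0,\beta)$ by a one-line probabilistic argument: the strong Markov property of the PDMP $R$ gives $\vb(y)=\vb(x)\E_y\bigl[e^{-q\tau_x^+};\tau_x^+<\tau_0^-\bigr]<\vb(x)$ for $0<y<x<\beta$. You instead prove $\vb'>0$ by an analytic minimum principle applied to $\vb'$. Your argument works for the $m<0$ case, but your treatment of the boundary case $m=0$ leans on the convolution $\int_0^\infty\vb'(x_0+y)\,dF(y)$ being strictly positive, which you handle by appealing to ``non-degeneracy of $F$.'' That step deserves a sentence: since $F$ is assumed absolutely continuous with $\E C_1<\infty$ and $\vb'\equiv 1$ on $(\beta,\infty)$ and $\vb'(\beta-)\ge 1$, the integral is positive unless $f$ vanishes a.e.\ on $(\beta-x_0-\varepsilon,\infty)$ while $\vb'$ vanishes on the remaining support, which requires more care to rule out than ``routine.'' The paper's probabilistic route avoids this entirely and is, in that sense, cleaner; your PDE-style route is a nice alternative but should close that gap explicitly. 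Everything else you wrote is sound and buys nothing extra or less than the paper's argument.
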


\proof
We begin by proving that $\vb$ is increasing on $(0,\beta)$.
Let
$\tau^+_a:=\inf\{t\geq 0:R_t\geq a\}$ and $\tau^-_0:=\inf\{t\geq 0:R_t<0\}$.
By the Strong Markov property of the PDMP $R_t$ for all $0<y<x<\beta$
$$ \vb(y)=\vb(x)\E_y\left[e^{-q\tau_x^+};\tau_x^+<\tau_0^-\right]<v_{\beta(x)},$$
which completes the proof of this statement.
Furthermore, since $p$ is $C^1$ by (\ref{eqvb}), we have that $\vb$ is $C^2$ on $(0,\beta)$
and $\vb$ is left-continuous at $\beta$.
Moreover, if we differentiate (\ref{eqvb}) with respect to x we get:
\begin{equation}
p(x)\vb''(x)=(-p'(x)-\lambda-q)\vb'(x)+\lambda\int_0^{\beta-x}\vb'(x+z)f(z)\, dz+\lambda\int_{\beta-x}^\infty f(z)\, dz.
\end{equation}
Consequently, since $\vb'(\beta-)\geq 1$, we obtain
$$p(\beta)\vb''(\beta-)\leq -p'(\beta)-\lambda-q+\lambda\int_{0}^\infty f(z)\, dz=-p'(\beta)-q<0.$$
Assume now a contrario that $\vb$ is not concave. Then by continuity of $\vb''$, there exists $\hat{x}\in (0,\beta)$,
such that $\vb''(\hat{x})=0$ and $\vb''(x)<0$ for all $x\in (\hat{x},\beta)$. Hence, from the assumption that $v_\beta'(\beta-)\geq 1$,
\begin{eqnarray*}
0=p(\hat{x})\vb''(\hat{x})&=&(-p'(\hat{x})-\lambda-q)\vb'(\hat{x})+\lambda\int_0^{\beta-\hat{x}}\vb'(\hat{x}+z)f(z)\, dz+\lambda\int_{\beta-\hat{x}}^\infty f(z)\, dz\\
&< &(-p'(\hat{x})-\lambda-q)\vb'(\hat{x})+\lambda\vb'(\hat{x})\int_0^{\beta-\hat{x}}f(z)\, dz+\lambda\vb'(\hat{x})\int_{\beta-\hat{x}}^\infty f(z)\, dz\\
&= &(-p'(\hat{x})-q)\vb'(\hat{x})<0.
\end{eqnarray*}
This leads to the contradiction. Therefore, $\vb''(x)<0$ for all $x\in(0,\beta)$.
\qed

We are ready to prove the main results.

\subsection{Proof of the Theorem \ref{bs exist}}
From \eqref{eqvb} and \eqref{cAdual} we can conclude that $v_\beta$ on $(0,\beta)$ satisfies the equation:
\begin{equation}\label{pretrans}
-p(x)v_\beta'(x)-(\lambda+q)v_\beta(x)+\lambda\int_0^{\beta-x}v_\beta(x+z)f(z)\,dz+\lambda\int_{\beta-x}^\infty \left(x+z-\beta+v_\beta(\beta)\right)f(z)\,dz=0
\end{equation}
with the initial condition $v_\beta(0)=0$. We recall that we are looking for $\beta^*$ satisfying $\vbs'(\bs)=1$.
Let \[u_\beta(x):=\vb'(x).\]
Transforming the equation \eqref{pretrans}, we obtain the following Fredholm equation for $u_\beta$:
\begin{equation}\label{ubeta}
u_\beta(x)=G_\beta(x)+\int_0^\beta K(x,y)u_\beta(y)\,dy,\end{equation}
 where
$$G_\beta(x):=\frac{\lambda}{p(x)}\left(\int_{\beta-x}^\infty \left(x+z-\beta\right)f(z)dz\right),$$
\begin{eqnarray*}
K(x,y):=\left\{
\begin{array}{lll}
\frac{-q}{p(x)}&\text{ for }& 0\leq y\leq x\\
\frac{\lambda}{p(x)}\int_{y-x}^\infty f(z)\,dz&\text{ for }&  y>x\geq 0.
\end{array}
\right.
\end{eqnarray*}
Taking $x=\beta$ in \eqref{ubeta} leads to the equation:
\begin{equation}\label{gamma}
u_\beta(\beta)=\frac{\lambda}{p(\beta)}\E(C_1)-\frac{q}{p(\beta)}\int_0^\beta u_\beta(y)\,dy.\end{equation}
We denote:
\[\gamma(\beta):=u_\beta(\beta).\]
We want to prove the existence of $\beta^*$ such that
\begin{equation}\label{stat}
\gamma(\beta^*)=1.\end{equation}
By our assumption:
\begin{equation}\label{wiekjeden0}
\gamma(0)=\frac{\lambda}{p(0)}\E(C_1)>1.\end{equation}
We will show that
\begin{equation}\label{wiekjeden}\gamma(\hat{x})\leq 1.\end{equation}
Indeed, assume a contrario that $\gamma(\hat{x})>1$. Then by Lemma \ref{concave}, we have that $u_{\hat{x}}$
is increasing and consequently $u_{\hat{x}}(y)>1$ for all $y\in(0,\hat{x})$. Thus
\begin{eqnarray*}
\gamma(\hat{x})&=&\frac{\lambda}{p(\hat{x})}\E(C_1)-\frac{q}{p(\hat{x})}\int_0^{\hat{x}} u_{\hat{x}}(y)\,dy\\
&\leq& \frac{\lambda}{p(\hat{x})}\E(C_1)-\frac{q}{p(\hat{x})}\hat{x}<1.
\end{eqnarray*}
In this way we derived a contradiction.
To get \eqref{stat} by inequalities \eqref{wiekjeden0} and \eqref{wiekjeden}, it suffices to prove that
$\gamma$ is a continuous function.
The latter follows from the following observation.
We denote $\Delta_\beta b_\beta(x)=b_{\beta}(x)-b_{\beta-}(x)$ for a general function $b_\beta$.
From \eqref{ubeta} it follows that
\[\Delta_\beta u_\beta(\beta)=\int_0^\beta K(\beta,y)\Delta_\beta u_\beta(y)\,dy\]
and hence function $\beta\rightarrow \Delta_\beta u_\beta(\beta)$ is continuous.
Note that from the last conclusion it follows that $\Delta_\beta\Delta_\beta u_\beta(\beta)=\Delta_\beta u_\beta(\beta)=0$ and
function $\beta\rightarrow  u_\beta(\beta)=\gamma(\beta)$ is also continuous.
This completes the proof.
\qed

\subsection{Proof of the Theorem \ref{T:optimal}}
Due to the Lemma \ref{concave}, we have $\vbs'(x)\geq \vbs'(\bs)=1$ for all $x\in(0,\bs)$. Moreover, $(\cA-q\mathbf{I})\vbs(x)=0$ for $x\in(0,\bs)$. It remains to prove that $(\cA-q\mathbf{I})\vbs(x)\leq 0$ for $x>\bs$. Since $\vbs(x)=x-\bs+\vbs(\bs)$ for $x>\bs$, we have:
\begin{equation}\label{eq1}
(\cA-q\mathbf{I})\vbs(x)=-p(x)+\lambda\int_0^\infty zdF(z)-q(x-\bs+\vbs(\bs)).
\end{equation}
Note that $\vbs$ is $C^1$ and therefore, by assumption that $p\in C^1$,
the function $(\cA-q\mathbf{I})\vbs(x)$ is continuous at $x=\bs$. Thus we have $(\cA-q\mathbf{I})\vbs(\bs)=0$. The assumption $-p'(x)-q\leq 0$, together with (\ref{eq1}), give the required inequality.
Hence by Theorem \ref{verif}, $\vbs(x)\geq v(x)$ for all $x\geq0$. At the same time from the definition of the value function
we have that $\vbs(x)\leq v(x)$ for all $x\geq0$. Consequently $\vbs(x)=v(x)$ and by Lemma \ref{hbeta} $v$ must uniquely solve the equation
\eqref{eqvb} with the boundary condition $v^\prime(\beta^*)=1$.
This completes the proof.
\qed

\end{document}